\newtheorem{theorem}{Theorem}
\newtheorem{lemma}{Lemma}
\theoremstyle{definition}
\newtheorem{proposition}{Proposition}
\newtheorem{definition}{Definition}
\newtheorem{remark}{Remark}
\newtheorem{assumption}{Assumption}
\newtheorem{problem}{Problem}
\title{Robust Tube-based Model Predictive Control for \\ Time-constrained Robot Navigation}
\author{Alexandros Nikou and Dimos V. Dimarogonas%
\thanks{The authors are with the School of Electrical Engineering and Computer Science, KTH Royal Institute of Technology, Stockholm, Sweden. E-mail: {\tt\small \{anikou,dimos\}@kth.se}. This work was supported by the H2020 ERC Grant BUCOPHSYS, the EU H2020 Co4Robots project, the Swedish Foundation for Strategic Research (SSF), the Swedish Research Council (VR) and the Knut och Alice Wallenberg Foundation (KAW).}}
\date{}
\begin{document}
\maketitle

\begin{abstract}
This paper deals with the problem of time-constrained navigation of a robot modeled by uncertain nonlinear non-affine dynamics in a bounded workspace of $\mathbb{R}^n$. Initially, we provide a novel class of robust feedback controllers that drive the robot between Regions of Interest (RoI) of the workspace. The control laws consists of two parts: an on-line controller which is the outcome of a Finite Horizon Optimal Control Problem (FHOCP); and a backstepping feedback law which is tuned off-line and guarantees that the real trajectory always remains in a bounded hyper-tube centered along the nominal trajectory of the robot. The proposed controller falls within the so-called tube-based Nonlinear Model Predictive control (NMPC) methodology. Then, given a desired high-level specification for the robot in Metric Interval Temporal Logic (MITL), by utilizing the aforementioned controllers, a framework that provably guarantees the satisfaction of the formula is provided. The proposed framework can handle the rich expressiveness of MITL in both safety and reachability specifications. Finally, the proposed framework is validated by numerical simulations. \\

\noindent \textbf{Keywords:} Robust Control, Predictive Control for Nonlinear Systems, Autonomous Systems.
\end{abstract}

\section{Introduction}
%\vspace{-1mm}
\emph{Navigation} is an important field in both the robotics and the control communities, due to the need for autonomous control \cite{rimon1992exact}. Applications arise in the fields of autonomous driving and air-traffic management. In another line of research, the problem of controlling systems under \emph{high-level specifications} has been gaining significant research attention \cite{belta2008fully, fainekos_girard_2009_temporal, murray_2012_ltl}. The aim of this work is to address a robot navigation problem under high-level tasks which include both time constrained reachability and safety. 

The qualitative specification language that has primarily been used to express the high-level tasks is Linear Temporal Logic (LTL) \cite {baier2008principles}. A suitable temporal logic for dealing with tasks that are required to be completed within \emph{certain time bounds} is \emph{Metric Interval Temporal Logic (MITL)}. MITL has been originally proposed in \cite{alur_1994} and has been used for controller synthesis in \cite{alex_2016_acc, alex_automatica_2018}. Given robot dynamics and an MITL formula, the controller synthesis procedure is as follows: first, the robot dynamics are abstracted into a discrete representation, the so-called Weighted Transition System (WTS), in which the time duration for navigating between states is modeled by a weight in WTS (abstraction); second, a product between the WTS and an automaton that accepts the runs that satisfy the given formula is computed; and third, once an accepting run in the product found, it maps into a sequence of feedback controllers of the robot dynamics.

The main focus of this paper is the first part of the aforementioned procedure. In particular, we aim to provide an \emph{abstraction of robot dynamics} into WTS in such a way that the rich MITL expressiveness in both reachability and safety specifications is exploited. In order address this problem, and due to the fact that the robot dynamics are highly nonlinear, under state/input constraints as well as under the presence of external disturbances/unmodeled dynamics, a Nonlinear Model Predictive Control (NMPC)~framework is~used~\cite{michalska_1993, frank_1998_quasi_infinite, mayne_2000_nmpc, grune2016nonlinear}. 

One of the main challenges in NMPC is the efficient handling of the external disturbances/uncertainties. A promising robust strategy, originally proposed for discrete-time linear systems in \cite{rakovic_2004_tubes_1}, is the so called tube-based approach. Tube-based approaches for  affine in the control continuous-time nonlinear systems with constant matrices multiplying the control input vectors have been proposed in \cite{yu_2013_tube}, which we aim to extend here in order to cover a larger class of nonlinear systems. Preliminary results in tube-based NMPC for nonlinear non-affine systems that state and input space have the same dimension can be found in our earlier work \cite{alex_IJRNC_2018}. In the current paper, these results are extended to underactuated systems, which arise in robotic applications with Lagrangian kinematics/dynamics models and their controller design constitutes a challenging task.

\noindent By taking into consideration the aforementioned, the contribution of this paper is twofold:
\begin{itemize}
	\item given a robot modeled by uncertain nonlinear non-affine dynamics and a workspace with RoI, under standard NMPC and controllability assumptions, a systematic control design methodology for tube-based NMPC which guarantees robust navigation between RoI under safety constraints is developed;
	%\item the computation of the navigation time that the robot requires to be navigated between the RoI is explicitly provided;
	\item we exploit the aforementioned control design in order to abstract the dynamics of the robot into a WTS. Then, given an MITL formula that the robot needs to satisfy for all times, by performing an MITL control synthesis procedure, a sequence of control laws that guarantees the satisfaction of the formula is provided. 
\end{itemize}

This constitutes a novel solution to a time-constrained navigation problem for systems with general uncertain dynamics and input/state constraints, in which complex tasks that include both reachability and safety are imposed.

The rest of this manuscript is structured as follows: Section \ref{sec:notation_preliminaries} provides the notation that will be used as well as necessary background knowledge; In Section \ref{sec:problem_formulation}, the problem treated in this paper is formally defined; Section \ref{sec:main_results} contains the main results of the paper; Section \ref{sec:simulation_results} is devoted in numerical simulations; and in Section \ref{sec:conclusions}, conclusions and future research directions are discussed.

\section{Notation and Preliminaries} \label{sec:notation_preliminaries}

The sets of positive integers, positive rational numbers and real numbers are denoted by $\mathbb{N}$, $\mathbb{Q}_{+}$ and $\mathbb{R}$, respectively. Given a set $\mathcal{S}$, denote by $|\mathcal{S}|$ its cardinality, by $\mathcal{S}^N = \mathcal{S} \times \dots \times \mathcal{S}$ its $N$-fold Cartesian product and by $2^\mathcal{S}$ the set of all its subsets. Given a vector $y \in \mathbb{R}^n$ denote by $\|y\|_{\scriptscriptstyle 2} \coloneqq \sqrt{y^\top y}$ and $\|y\|_{\scriptscriptstyle B}$ $\coloneqq \sqrt{y^\top B y}$, $B $ $\ge 0$ its Euclidean and weighted norm, respectively; $\lambda_{\scriptscriptstyle \min}(B)$ stands for the minimum absolute value of the real part of the eigenvalues of $B \in \mathbb{R}^{n \times n}$; $0_{m \times n} \in \mathbb{R}^{m \times n}$ and $I_n \in \mathbb{R}^{n \times n}$ stand for the $m \times n$ matrix with all entries zeros and the identity matrix, respectively. The notation ${\rm diag}\{B_1$, $\dots$, $B_n\}$ stands for the block diagonal matrix with the matrices $B_1$, $\dots$, $B_n$ in the main diagonal; $\mathcal{B}(\chi, r) \coloneqq \{y \in \mathbb{R}^n: \|y-\chi\|_{2} \leq r\}$ stands for the $n$-th dimensional ball with center and radius $\chi \in \mathbb{R}^{n}$, $r >0$, respectively. Given a function $f: \mathbb{R}^n \to \mathbb{R}^m$, $\frac{\partial f_i}{\partial x_j}$ denotes the element of row $i$ and column $j$ of the Jacobian matrix of $f$, with $i$ $\in \{1,\dots,n\}$ and $j \in$ $\{1,\dots,m\}$. Given two vectors $x$, $y \in \mathbb{R}^n$ their \emph{convex hull} is defined by: $$\mathcal{C}(x,y) \coloneqq \{\theta x + (1-\theta) y : \theta \in (0,1)\}.$$ A vector of a canonical basis of $\mathbb{R}^n$ is defined by:
\begin{align} \label{eq:basis_vector}
\mathfrak{e}_n^{i} \coloneqq \Big[0, \dots, 0, \underbrace{1}_{\scriptscriptstyle i-\rm{th \ element}}, 0, \dots, 0 \Big]^\top.
\end{align}
Given~sets~$\mathcal{S}_1$, $\mathcal{S}_2$~$\subseteq \mathbb{R}^n$~and~matrix $B \in \mathbb{R}^{n \times m}$,~the \emph{Minkowski addition}, the~\emph{Pontryagin~difference} and the \emph{matrix-set multiplication} are respectively defined by: 
\begin{align*}
\mathcal{S}_1 & \oplus \mathcal{S}_2 \coloneqq \{s_1 + s_2 : s_1 \in \mathcal{S}_1, s_2 \in \mathcal{S}_2\}, \\
\mathcal{S}_1 & \ominus \mathcal{S}_2 \coloneqq \{s_1 : s_1+s_2 \in \mathcal{S}_1, \forall s_2 \in \mathcal{S}_2\}, \\
B & \circ \mathcal{S} \coloneqq \{b: \exists s \in \mathcal{S}, b = Bs\}.
\end{align*}

\begin{lemma} \cite{alex_IJRNC_2018} \label{lemma:basic_ineq}
For any constant $\rho > 0$, vectors $y_1$, $y_2 \in \mathbb{R}^n$ and matrix $B \in \mathbb{R}^{n \times n}$, $B > 0$ it holds that: $$y_1 B y_2 \le \frac{1}{4 \rho} y_1^\top B y_1 + \rho y_2^\top B y_2.$$
\end{lemma}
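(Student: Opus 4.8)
The plan is to establish the inequality by a standard ``completing the square'' argument on the quadratic form induced by $B$. The key observations are that, since $B > 0$, the matrix $B$ is symmetric and positive semidefinite, so that $z^\top B z \ge 0$ for every $z \in \mathbb{R}^n$, and the associated bilinear form is symmetric, i.e. $y_1^\top B y_2 = y_2^\top B y_1$ for all $y_1, y_2 \in \mathbb{R}^n$. (We also read the left-hand side $y_1 B y_2$ as $y_1^\top B y_2$.)

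First, since $\rho > 0$, I would introduce the auxiliary vector
\begin{equation*}
z \coloneqq \frac{1}{2\sqrt{\rho}}\, y_1 - \sqrt{\rho}\, y_2 \in \mathbb{R}^n,
\end{equation*}
which is well defined, and note that $z^\top B z \ge 0$ because $B$ is positive (semi)definite.

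Second, I would expand the quadratic form $z^\top B z$, using the symmetry of $B$ to combine the two cross terms $-\tfrac{1}{2} y_1^\top B y_2$ and $-\tfrac{1}{2} y_2^\top B y_1$ into $- y_1^\top B y_2$, obtaining
\begin{equation*}
0 \le z^\top B z = \frac{1}{4\rho}\, y_1^\top B y_1 - y_1^\top B y_2 + \rho\, y_2^\top B y_2 .
\end{equation*}
Rearranging this inequality yields exactly the claimed bound $y_1^\top B y_2 \le \frac{1}{4\rho} y_1^\top B y_1 + \rho\, y_2^\top B y_2$. An equivalent route, if preferred, is to write $u \coloneqq B^{1/2} y_1$, $v \coloneqq B^{1/2} y_2$ via the symmetric positive definite square root $B^{1/2}$, reducing the statement to the scalar Young-type inequality $u^\top v \le \|u\|_2 \|v\|_2 \le \frac{1}{4\rho}\|u\|_2^2 + \rho \|v\|_2^2$ by Cauchy--Schwarz.

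There is no real obstacle here: the only point requiring a moment's care is invoking the symmetry of $B$ (guaranteed by the hypothesis $B > 0$) so that the cross term appears with coefficient $1$, and observing that only positive semidefiniteness of $B$ is actually needed for the nonnegativity of $z^\top B z$.
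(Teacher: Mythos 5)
Your proof is correct: the completing-the-square argument with $z = \tfrac{1}{2\sqrt{\rho}}y_1 - \sqrt{\rho}\,y_2$ expands exactly as you claim, and your remark that only symmetry plus positive \emph{semi}definiteness of $B$ is needed is accurate. The paper itself states this lemma without proof, citing it from the authors' earlier work \cite{alex_IJRNC_2018}, so there is no in-paper argument to compare against; your derivation is the standard one and fully suffices.
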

\begin{proposition} (Mean value theorem for vector valued functions \cite{zemouche2008observers}) \label{pro:mvt}
	Consider a function $f: \mathbb{R}^n \to \mathbb{R}^m$ which is differentiable on an open set $\mathcal{S} \subseteq \mathbb{R}^n$. Let $x$, $y$ two points of $\mathcal{S}$ such that $\mathcal{C}(x,y) \subseteq \mathcal{S}$. Then, there exist constant vectors $\varpi_1$, $\dots,$ $\varpi_m \in \mathcal{C}(x,y)$ such that:
	\begin{align}
	f(x)-f(y) = \left[ \sum_{k = 1}^{m} \sum_{j=1}^{n} \mathfrak{e}_m^k (\mathfrak{e}_n^j)^\top \frac{\partial f_k(\varpi_k)}{\partial x_j} \right] (x-y).
	\end{align}
\end{proposition}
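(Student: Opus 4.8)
The plan is to reduce the vector-valued identity to $m$ independent applications of the classical scalar mean value theorem, one for each component $f_k$ of $f$, and then to recognise the resulting collection of gradient identities as a single matrix-vector product.

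First I would fix $k \in \{1, \dots, m\}$ and introduce the auxiliary scalar function $g_k \colon [0,1] \to \mathbb{R}$, $g_k(\theta) \coloneqq f_k\big(y + \theta (x-y)\big)$. Since the endpoints $x, y$ lie in the open set $\mathcal{S}$ and $\mathcal{C}(x,y) \subseteq \mathcal{S}$, the entire closed segment joining $x$ and $y$ is contained in $\mathcal{S}$; combined with differentiability (hence continuity) of $f$ on $\mathcal{S}$, this makes $g_k$ continuous on $[0,1]$ and differentiable on $(0,1)$, and the chain rule gives $g_k'(\theta) = \sum_{j=1}^{n} \frac{\partial f_k}{\partial x_j}\big(y + \theta(x-y)\big)\, (x_j - y_j)$. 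Applying the scalar mean value theorem to $g_k$ yields some $\theta_k \in (0,1)$ with $g_k(1) - g_k(0) = g_k'(\theta_k)$; setting $\varpi_k \coloneqq y + \theta_k (x-y) \in \mathcal{C}(x,y)$ and using $g_k(1) = f_k(x)$, $g_k(0) = f_k(y)$, this reads $f_k(x) - f_k(y) = \sum_{j=1}^{n} \frac{\partial f_k(\varpi_k)}{\partial x_j}\,(x_j - y_j)$. The only conceptual subtlety --- and the reason the statement carries $m$ separate vectors rather than one --- is that each component is evaluated at its own mean-value point $\varpi_k$; in general no single point on the segment works for all components simultaneously, which is already visible from the fact that the bracketed matrix mixes evaluations at different $\varpi_k$.

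Finally I would observe that $\mathfrak{e}_m^{k} (\mathfrak{e}_n^{j})^\top$ is the elementary $m \times n$ matrix with a $1$ in entry $(k,j)$ and zeros elsewhere, so the bracketed double sum $M \coloneqq \sum_{k=1}^{m} \sum_{j=1}^{n} \mathfrak{e}_m^{k} (\mathfrak{e}_n^{j})^\top \frac{\partial f_k(\varpi_k)}{\partial x_j}$ is precisely the $m \times n$ matrix whose $(k,j)$ entry equals $\frac{\partial f_k(\varpi_k)}{\partial x_j}$. Reading off the $k$-th coordinate of $M(x-y)$ then recovers exactly the component identity derived above, for every $k$, so $M(x-y) = f(x) - f(y)$, which is the claim. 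I do not expect a genuine obstacle here: the only points requiring care are the domain bookkeeping that justifies applying the scalar theorem on the whole segment, and keeping the row/column index convention of $\mathfrak{e}_m^{k}(\mathfrak{e}_n^{j})^\top$ consistent when assembling $M$.
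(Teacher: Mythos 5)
Your proof is correct and is exactly the standard argument: the paper itself states this proposition without proof, quoting it from the cited reference, and that reference's proof is precisely your componentwise reduction to the scalar mean value theorem along the segment, followed by reassembling the $m$ gradient identities into the matrix $\sum_{k,j}\mathfrak{e}_m^{k}(\mathfrak{e}_n^{j})^\top \frac{\partial f_k(\varpi_k)}{\partial x_j}$. You also correctly isolate the one genuine subtlety, namely that each component $f_k$ requires its own point $\varpi_k$ on the segment, which is why the statement involves $m$ vectors rather than a single one.
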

\begin{definition} \label{def:RPI_set} \cite{alex_IJRNC_2018}
Consider a dynamical system $\dot{\chi} = f(\chi,u,d)$ where: $\chi \in \mathcal{X}$, $u \in \mathcal{U}$, $d \in \mathcal{D}$ with initial condition $\chi(0) \in \mathcal{X}$. A set $\mathcal{X}' \subseteq \mathcal{X}$ is a \emph{Robust Control Invariant (RCI) set} for the system, if there exists a feedback control law $u \coloneqq \kappa(\chi) \in \mathcal{U}$, such that for all $\chi(0) \in \mathcal{X}'$ and for all $d(t) \in \mathcal{D}$ it holds that $\chi(t) \in \mathcal{X}'$ for all $t \ge 0$, along every solution $\chi(t)$.
\end{definition}

\begin{definition} \label{def: WTS} \cite{alex_2016_acc}
A \emph{Weighted Transition System} (WTS) is a tuple $(S, S_0, {\rm Act}, \longrightarrow, \mathfrak{t}, \Sigma, L)$ where $S$ is a finite set of states; $S_0 \subseteq S$ is a set of initial states; ${\rm Act}$ is a set of actions; $\longrightarrow \subseteq S \times {\rm Act} \times S$ is a transition relation; $\mathfrak{t}: \longrightarrow \rightarrow \mathbb{Q}_{+}$ is a map that assigns a positive weight to each transition; $\Sigma$ is a finite set of atomic propositions (an \textit{atomic proposition} $\sigma \in \Sigma$ is a statement that is either true or false); and $L: S \rightarrow 2^{\Sigma}$ is a labeling function.
\end{definition}

\begin{definition}\label{run_of_WTS} \cite{alex_2016_acc}
A \textit{timed run} of a WTS is an infinite sequence $r^t = (r(0), \tau(0))(r(1), \tau(1)) \ldots$, such that $r(0) \in S_0$, and for all $l \geq 0$, it holds that $r(l) \in S$ and $(r(l), u(l), r(l+1)) \in \longrightarrow$ for a sequence of actions $u(0) u(1) u(2) \ldots$ with $u(l) \in \rm Act$, $\forall l \geq 0$. The \textit{time stamps} $\tau(l)$, $l \geq 0$ are inductively defined as: $1)$ $\tau(0) = 0$; $2)$ $\displaystyle \tau(l+1) \coloneqq  \tau(l) + \mathfrak{t}(r(l), r(l+1))$, $\forall l \geq 0$.
\end{definition}

\noindent The syntax of MITL (see \cite{alur_1994}) over a set of atomic propositions $\Sigma$ is defined by the grammar:
\begin{equation*}
\varphi := \sigma \ | \ \neg \varphi \ | \ \varphi_1 \wedge \varphi_2 \ | \ \bigcirc_I \varphi  \ | \ \Diamond_I \varphi \mid \square_I \varphi \mid  \varphi_1 \ \mathcal{U}_I \ \varphi_2,
\end{equation*}
where $\sigma \in \Sigma$, and $\bigcirc$, $\Diamond$, $\square$ and $\mathcal{U}$ are the next, eventually, always and until temporal operator, respectively; $\neg$, $\wedge$ are the negation and conjunction operators, respectively; $I = [a, b] \subseteq \mathbb{Q}_{+}$ where $a, b \in [0, \infty]$ with $a < b$ is a non-empty timed interval. MITL formulas are interpreted over timed runs like the ones produced by a WTS which is given in Definition \ref{run_of_WTS}. For the semantics of MITL see \cite[Sec. II, p. 2]{alex_2016_acc}. Any MITL formula $\varphi$ over $\Sigma$ can be algorithmically translated into a Timed B\"uchi Automaton (TBA) with the alphabet $2^{\Sigma}$, such that the language of timed words that satisfy $\varphi$ is the language of timed words produced by the TBA \cite{maler_MITL_TA}. Due to space constraints, for a detailed preliminary background regarding timed verification we refer the reader to \cite{alex_2016_acc}.
\vspace{-1mm}
\section{Problem Formulation} \label{sec:problem_formulation}
\subsection{System Model}
Consider a robot operating in a workspace $\mathcal{W} \subseteq \mathbb{R}^n$ governed by the following \emph{uncertain kinematics-dynamics model}:
\vspace{-3mm}
\begin{subequations}
	\begin{align} 
	\dot{\chi} & = v, \label{eq:kinematics} \\
	\dot{v} & = f(\chi,v,u) + d, \label{eq:unc_dynamics}
	\end{align}
\end{subequations}
where $\chi \in \mathcal{W}$ denotes the state of the robot in the workspace (position, orientation); $v \in \mathbb{R}^{n}$ stands for the velocity; $u \in \mathbb{R}^n$ is the control input; $f: \mathbb{R}^n \times \mathbb{R}^n \times \mathbb{R}^n \to \mathbb{R}^n$ is a continuous nonlinear function; and $d \in \mathbb{R}^n$ stands for external disturbances, uncertainties and unmodeled dynamics. The velocity is constrained in a connected set $\mathcal{V} \subseteq \mathbb{R}^{n}$ which contains the origin. The control inputs need to satisfy $u \in \mathcal{U}$, where $\mathcal{U}$ is a convex set containing the origin. Consider also bounded disturbances $d \in \mathcal{D} \coloneqq \{d \in \mathbb{R}^n: \|d\|_{\scriptscriptstyle 2} \le \widetilde{d}\}$, where $\widetilde{d} > 0$. Define the corresponding \emph{nominal dynamics} for \eqref{eq:kinematics}-\eqref{eq:unc_dynamics} by:
\begin{subequations}
	\begin{align} 
	\dot{\overline{\chi}} & = \overline{v}, \label{eq:nom_kinematics} \\
	\dot{\overline{v}} & = f(\overline{\chi}, \overline{v}, \overline{u}), \label{eq:nom_dynamics}
	\end{align}
\end{subequations}
where $w \equiv 0$, $\overline{\chi} \in \mathcal{W}$, $\overline{v} \in \mathcal{V}$ and $\overline{u} \in \mathcal{U}$. Define $J: \mathcal{W} \times \mathcal{V} \times \mathcal{U} \to \mathbb{R}^n \times \mathbb{R}^{n}$ by:
\begin{align} \label{eq:funct_J}
J(\chi,v,u) = \sum_{i = 1}^{n} \sum_{j=1}^{n} \mathfrak{e}_n^i (\mathfrak{e}_n^j)^\top \frac{\partial f_{i}(\chi,v,u)}{\partial u_j},
\end{align}
where $f_i$ is the $i$-th component of the vector-valued function $f$, and $\mathfrak{e}_n^{i}$, $\mathfrak{e}_n^{j}$ as given in \eqref{eq:basis_vector}.

\begin{assumption} \label{ass:f_diff}
	$f$ is \emph{continuously differentiable} with respect to $x$, $v$ and $u$ in $\mathcal{W} \times \mathcal{V} \times \mathcal{U}$ with $f(0,0,0) = 0$.
\end{assumption}

\begin{assumption} \label{ass:stabiliz}
	The linear system $\dot{\overline{\eta}} = A \overline{\eta} + B \overline{u}$, where $\overline{\eta} \coloneqq [\overline{\chi}^\top, \overline{v}^\top]^\top \in \mathbb{R}^{2n}$, that is the outcome of the Jacobian linearization of the nominal dynamics \eqref{eq:nom_kinematics}-\eqref{eq:nom_dynamics} around the equilibrium state $(\chi, v) = (0,0)$ is stabilizable.
\end{assumption}

\begin{assumption} \label{ass:J_lower_bound}
There exists a constant $\underline{J}$ such that:
\begin{align} \label{eq:J_lower_bound}
\lambda_{\min}\left[\frac{J(\cdot) + J(\cdot)^\top}{2}\right] \ge \underline{J} > 0, \forall \chi \in \mathcal{W}, v \in \mathcal{V}, u \in \mathcal{U}. 
\end{align}
\end{assumption}

In the given workspace, there exist $\mathfrak{m} \in \mathbb{N}$ Regions of Interest (RoI) labeled by $\mathcal{M} \coloneqq \{1,\dots, \mathfrak{m}\}$. Without loss of generality, assume that the RoI are modeled by balls, i.e., $\mathcal{R}_m \coloneqq \mathcal{B}(y_{m}, p_m)$, $m \in \mathcal{M}$, where $y_{m}$ and $p_m > 0$ stands for the center and radius of RoI $\mathcal{R}_m$, respectively. Define also the the union of RoI by $$\mathcal{R} \coloneqq \bigcup_{m \in \mathcal{M}} \mathcal{R}_m.$$ 

Due to the fact that we are interested in imposing safety specifications to the robot, at each time $t \ge 0$, the robot is occupying a ball $\mathcal{B}(\chi(t), \mathfrak{r})$ that covers its volume, where $\chi(t)$ and $\mathfrak{r} > 0$ are its center and radius, respectively. Assume that $\displaystyle \min_{m \in \mathcal{M}} \{p_m\} > \mathfrak{r}$, which means that the RoI have sufficiently larger volume than the robot. 

\subsection{Objectives}

The control objective is the navigation of the robot with dynamics as in \eqref{eq:kinematics}-\eqref{eq:unc_dynamics} between RoI so that it obeys a given high-level timed specification over atomic tasks. Atomic tasks are captured through a given finite set of atomic propositions $\Sigma$. Each RoI is labeled with atomic propositions that hold true there. Define the labeling function:
\begin{align} \label{eq:label_function_L}
L: \mathcal{R} \to 2^{\Sigma},
\end{align}
which maps each RoI with a subset of atomic propositions that hold true there. Note that some of the RoI may be assigned with labels that indicate \emph{unsafe regions}, i.e., the robot is required to avoid visiting them (\emph{safety specifications}).

\begin{definition} \label{def:unique_timed_word}
	A trajectory $x(t)$ is \emph{uniquely associated with a timed run} $r^t = (r(0), \tau(0))$ $(r(1), \tau(1))(r(2), \tau(2))\ldots$, where $r(l) \in \mathcal{R}$, $\forall l \in \mathbb{N}$, is a sequence of RoI that the robot crosses, if the following hold:	
	\begin{enumerate}
		\item $\tau(0) = 0$, i.e., the robot starts the motion at time $t = 0$;
		\item $\mathcal{B}(x(\tau(0)), \mathfrak{r}) \subsetneq r(0)$, i.e., initially, the volume of the robot is entire within the RoI $r(0) \in \mathcal{R}$;
		\item $\mathcal{B}(x(\tau(l)), \mathfrak{r}) \subsetneq r(l)$, $\forall l \in \mathbb{N}$, i.e., the robot changes discrete state \emph{only when} its entire volume is contained in the corresponding RoI;
		\item $\tau(l+1) \coloneqq \tau(l) + \mathfrak{t}(r(l), r(l+1))$, $\forall l \in \mathbb{N}$, where:
		\begin{equation} \label{eq:desired_times_T}
		\mathfrak{t}: \mathcal{R} \times \mathcal{R} \to \mathbb{Q}_{+},
		\end{equation}
		is a function that models the duration that the robot needs to be driven between regions $r(l)$ and $r(l+1)$.
	\end{enumerate}
\end{definition}  

\begin{definition} \label{def:x_satisfaction}
	A trajectory $\chi(t)$ \emph{satisfies} an MITL formula $\varphi$ over the set of atomic propositions $\Sigma$, formally written as $\chi(t) \models \varphi$, $\forall t \ge 0$, if and only if there exists a timed run $r^t$ to which the trajectory $\chi(t)$ is uniquely associated, according to Definition \ref{def:unique_timed_word}, which satisfies $\varphi$. For MITL semantics see \cite[Sec. II, p. 2]{alex_2016_acc}).
\end{definition}

\subsection{Problem Statement}

\noindent The problem considered in this paper is stated as follows:

\begin{problem} \label{problem}
Consider a robot governed by dynamics \eqref{eq:kinematics}-\eqref{eq:unc_dynamics}, covered by the ball $\mathcal{B}(\chi(t), \mathfrak{r})$, operating in the workspace $\mathcal{W} \subseteq \mathbb{R}^{n}$. The workspace contains the RoI $\mathcal{R}_m$, $m \in \mathcal{M}$ modeled also by balls. Given a task specification formula $\varphi$ expressed in MITL over the set of atomic propositions $\Sigma$ and labeling functions $L$ as in \eqref{eq:label_function_L}. Then, for every $d \in \mathcal{D}$, design a feedback control law $u = \kappa(\chi, v) \in \mathcal{U}$ such that the robot trajectory in the workspace fulfills the MITL specification $\varphi$, i.e., $\chi(t) \models \varphi$, $\forall t \ge 0$,  according to Definition \ref{def:unique_timed_word}. Moreover, the robot is required to remain in the workspace for all times.
\end{problem}

\begin{remark}
Note that Problem \ref{problem} constitutes a general time-constrained navigation problem due to the fact that the dynamics \eqref{eq:kinematics}-\eqref{eq:unc_dynamics} arise in many robotic applications. Furthermore, the rich expressiveness of MITL in both reachability and safety specifications can be exploited.
\end{remark}

\section{Main Results} \label{sec:main_results}

In this section, the aforementioned control design problem is addressed by taking the following steps:
\begin{enumerate}
\item For navigating the robot between RoI, we propose a robust NMPC feedback law that has two components: an on-line control law which is the outcome of a Finite Horizon Optimal Control Problem (FHOCP) solved at each sampling time; a state feedback law whose gain is designed off-line and guarantees that the trajectory of the closed loop system remains in a hyper-tube for all times. The time duration for navigating between RoI is explicitly provided. (Section \ref{sec:feedback_control_design})
\item Then, the dynamics \eqref{eq:kinematics}-\eqref{eq:unc_dynamics} are abstracted into a WTS, exploiting the fact that the timed runs in the WTS project onto uniquely associated trajectories according to Definition \ref{def:unique_timed_word}. (Section \ref{sec:discrete_system_abstraction})
\item By invoking ideas from our previous works \cite{alex_2016_acc, alex_automatica_2018}, a controller synthesis procedure that gives a sequence of control laws that serve as solution to Problem \ref{problem} is performed. (Section \ref{sec:control_synthesis}).
\end{enumerate}

\subsection{Feedback Control Design} \label{sec:feedback_control_design}

Consider a robot with dynamics \eqref{eq:kinematics}-\eqref{eq:unc_dynamics} occupying a RoI $\mathcal{R}_{\scriptscriptstyle \rm s} \in \mathcal{R}$ at time $\mathfrak{t}_{\scriptscriptstyle \rm s} \ge 0$. The feedback control law needs to guarantee that the robot is navigated towards a desired RoI $\mathcal{R}_{\scriptscriptstyle \rm d} \in \mathcal{R}$, $\mathcal{R}_{\scriptscriptstyle \rm s} \neq \mathcal{R}_{\scriptscriptstyle \rm d}$ without intersecting with any other RoI, due to the fact that safety specifications are required. Denote by $\chi_{\scriptscriptstyle \rm d} \in \mathcal{R}_{\scriptscriptstyle \rm d}$ the center of the desired RoI $\mathcal{R}_{\scriptscriptstyle \rm d}$. Define the error vector $e \coloneqq \chi - \chi_{\scriptscriptstyle \rm d} \in \mathbb{R}^{n}$. The \emph{uncertain error kinematics/dynamics} are given by:
\begin{subequations}
	\begin{align}
	\dot{e} & = v, \label{eq:unc_error_kin} \\
	\dot{v} & = f(e+\chi_{\scriptscriptstyle \rm d}, v, u)+d, \label{eq:unc_error_dyn}
	\end{align}
\end{subequations}
and the corresponding \emph{nominal error kinematics/dynamics} by:
\vspace{-3mm}
\begin{subequations}
	\begin{align}
	\dot{\overline{e}} & = \overline{v}, \label{eq:nom_error_kin} \\
	\dot{\overline{v}} & = f(\overline{e}+\chi_{\scriptscriptstyle \rm d}, \overline{v}, \overline{u}). \label{eq:nom_error_dyn}
	\end{align}
\end{subequations}
By recalling that $\mathcal{B}(\chi(t), \mathfrak{r})$ stands for the volume of the robot at time $t$, define the set that captures the state constraints by:
\begin{align*}
\mathcal{X} & \coloneqq \{\chi(t) \in \mathbb{R}^n : \mathcal{B}(\chi(t), \mathfrak{r}) \subsetneq \mathcal{W}, \mathcal{B}(\chi(t), \mathfrak{r}) \cap \{\mathcal{R} \backslash \{\mathcal{R}_{\scriptscriptstyle \rm s}, \mathcal{R}_{\scriptscriptstyle \rm d} \} \} = \emptyset\}.
\end{align*}
The two constraints refer to the fact that the robot needs to remain in the workspace for all times and the fact that it should not intersect with any other RoI except from $\mathcal{R}_{\scriptscriptstyle \rm s}$, $\mathcal{R}_{\scriptscriptstyle \rm d}$. In order to translate the aforementioned constraints for the error state $e$, define the set $\mathcal{E} \coloneqq \{e \in \mathbb{R}^n : e \in \mathcal{X} \oplus (-\chi_{\scriptscriptstyle \rm d})\}$, where $\oplus$ is the Minkowski addition operator given in Section \ref{sec:notation_preliminaries}. Under this modification, by using basic properties of Minkowski operator $\oplus$, it is guaranteed that $\chi \in \mathcal{X} \Leftrightarrow e \in \mathcal{E}$. Consider the \emph{feedback control law:}
\begin{align} \label{eq:control_law_u}
u \coloneqq \overline{u}(\overline{e}, \overline{v}) + \kappa(e, v, \overline{e}, \overline{v}),
\end{align}
which consists of a nominal control action $\overline{u}(\overline{e}, \overline{v}) \in \mathcal{U}$ and a state feedback law $\kappa : \mathbb{R}^{n} \times \mathbb{R}^{n} \times \mathbb{R}^{n} \times \mathbb{R}^{n} \to \mathbb{R}^{n}$. The control action $\overline{u}(\overline{e}, \overline{v})$ will be the outcome of a nominal FHOCP which is solved on-line at each sampling time. The feedback law $\kappa(\cdot)$ is tuned off-line and it is used to guarantee that the real states $e$, $v$ remain in a bounded hyper-tube centered along the nominal states $\overline{e}$, $\overline{v}$. Define by $\widetilde{e} \coloneqq e - \overline{e} \in \mathbb{R}^{n}$, $\widetilde{v} \coloneqq v - \overline{v} \in \mathbb{R}^n$ the deviation between the real states of the uncertain system \eqref{eq:unc_error_kin}-\eqref{eq:unc_error_dyn} and the states of the nominal system \eqref{eq:nom_error_kin}-\eqref{eq:nom_error_dyn} with $\widetilde{e}(0) = \widetilde{v}(0) = 0$. It will be proved hereafter that the states $\widetilde{e}$, $\widetilde{v}$ are invariant in a compact set whose volume depends of the bounds of the derivatives of $f$ as well as the bound $\widetilde{d}$. The dynamics of the states $\widetilde{e}$, $\widetilde{v}$ are written as:
\begin{subequations}
	\begin{align}
	\dot{\widetilde{e}} & = \dot{e} - \dot{\overline{e}} = v - \overline{v} = \widetilde{v}, \label{eq:tilde_e} \\
	\dot{\widetilde{v}} & = \dot{v} - \dot{\overline{v}} = f(e+\chi_{\scriptscriptstyle \rm d}, v, u) - f(\overline{e}+\chi_{\scriptscriptstyle \rm d}, \overline{v}, \overline{u}) + d \notag \\
	& = f(e+\chi_{\scriptscriptstyle \rm d}, v, u) - f(\overline{e}+\chi_{\scriptscriptstyle \rm d}, \overline{v}, u) f(\overline{e}+\chi_{\scriptscriptstyle \rm d}, \overline{v}, u) - f(\overline{e}+\chi_{\scriptscriptstyle \rm d}, \overline{v}, \overline{u}) + d \notag \\
	& = \lambda(e, \overline{e}, v, \overline{v},u,\overline{u}) + f(\overline{e}+\chi_{\scriptscriptstyle \rm d}, \overline{v}, u) - f(\overline{e}+\chi_{\scriptscriptstyle \rm d}, \overline{v}, \overline{u}) + d. \label{eq:tilde_v}
	\end{align}
\end{subequations}
where the function $\lambda$ is defined by $\lambda(e, \overline{e}, v, \overline{v},u,\overline{u}) \coloneqq f(e+\chi_{\scriptscriptstyle \rm d}, v, u)  - f(\overline{e}+\chi_{\scriptscriptstyle \rm d}, \overline{v}, u)$ and is upper bounded by:
\begin{align}
\|\lambda(\cdot)\|_{\scriptscriptstyle 2} & \le \|(e+\chi_{\scriptscriptstyle \rm d}, v, u) - f(\overline{e}+\chi_{\scriptscriptstyle \rm d}, v, u)\|_{\scriptscriptstyle 2} + \|f(\overline{e}+\chi_{\scriptscriptstyle \rm d}, v, u) - f(\overline{e}+\chi_{\scriptscriptstyle \rm d}, \overline{v}, u)\|_{\scriptscriptstyle 2} \notag \\
& \le L_1 \|\widetilde{e}\|_{\scriptscriptstyle 2} + L_2 \|\widetilde{v}\|_{\scriptscriptstyle 2} \notag \\
& \le L (\|\widetilde{e}\|_{\scriptscriptstyle 2}+\|\widetilde{v}\|_{\scriptscriptstyle 2}). \notag
\end{align}
The constants $L_1$, $L_2$ stand for the Lipschitz constants of function $f$ with respect to variables $\chi$ and $v$, respectively, and $L \coloneqq \max\{L_1, L_2\}$.
\begin{lemma} \label{lemma:rci_set}
	The state feedback law designed as:
	\begin{align} \label{eq:kappa_law}
	\kappa(e, \overline{e}, v, \overline{v}) = -k(e-\overline{e})-k(v-\overline{v}),
	\end{align}
	where $k > 0$ is chosen such that:
	\begin{align} \label{eq:k_rho}
	k > \tfrac{1}{\underline{J}} \left[(1+2\rho) L + \tfrac{5}{4}\right], \ \rho > \tfrac{L}{2},
	\end{align}
	renders the sets:
	\begin{subequations}
		\begin{align}
		\Omega_1  \coloneqq  \Big\{\widetilde{e}(t) \in \mathbb{R}^n : \|\widetilde{e}(t)\|_{\scriptscriptstyle 2} & \le  \tfrac{\widetilde{d}}{\sqrt{\min\{\alpha_1,\alpha_2}\}}, \forall t \ge 0 \Big\},\label{eq:omega_1} \\
		\Omega_2  \coloneqq  \Big\{\widetilde{v}(t) \in \mathbb{R}^n : \|\widetilde{v}(t)\|_{\scriptscriptstyle 2} & \le \tfrac{2 \widetilde{d}}{\sqrt{\min\{\alpha_1,\alpha_2}\}}, \forall t \ge 0 \Big\}, \label{eq:omega_2}
		\end{align}
	\end{subequations}
	RCI sets for the error dynamics \eqref{eq:tilde_e}, \eqref{eq:tilde_v}, according to Definition \ref{def:RPI_set}. The constants $\alpha_1$, $\alpha_2 > 0$ are defined by:
	\begin{align} \label{eq:cosntants_alpha}
	\alpha_1 \coloneqq 1- \tfrac{L}{2 \rho}, \alpha_2 \coloneqq k \underline{J}-(1 + 2\rho) L -\tfrac{5}{4}.
	\end{align}
\end{lemma}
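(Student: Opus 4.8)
The plan is to exhibit a single quadratic Lyapunov function for the coupled deviation dynamics \eqref{eq:tilde_e}--\eqref{eq:tilde_v} and to show that, starting from $\widetilde e(0)=\widetilde v(0)=0$, it stays bounded by a multiple of $\widetilde d^2$, so that $\widetilde e(t)\in\Omega_1$ and $\widetilde v(t)\in\Omega_2$ for all $t\ge 0$ and every $d\in\mathcal D$. First I would rewrite the right-hand side of \eqref{eq:tilde_v}. With the feedback \eqref{eq:kappa_law} one has $u-\overline u=\kappa=-k(\widetilde e+\widetilde v)$; applying Proposition \ref{pro:mvt} to the map $u\mapsto f(\overline e+\chi_{\rm d},\overline v,u)$ gives $f(\overline e+\chi_{\rm d},\overline v,u)-f(\overline e+\chi_{\rm d},\overline v,\overline u)=\mathcal J\,(u-\overline u)$, where $\mathcal J$ is the matrix obtained by evaluating $J(\cdot)$ from \eqref{eq:funct_J} at points of the segment $\mathcal C(u,\overline u)$, which lies in $\mathcal U$ since $\mathcal U$ is convex and $u,\overline u\in\mathcal U$. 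Hence $\dot{\widetilde v}=\lambda(\cdot)-k\mathcal J(\widetilde e+\widetilde v)+d$, and Assumption \ref{ass:J_lower_bound} yields $w^\top\mathcal J w\ge\underline J\|w\|_2^2$ for any $w\in\mathbb R^n$.

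The key step is to work in the combined coordinate $s\coloneqq\widetilde e+\widetilde v$ -- precisely the vector on which $\kappa$ acts -- in which $\dot{\widetilde e}=\widetilde v=s-\widetilde e$ and $\dot s=(s-\widetilde e)+\lambda(\cdot)-k\mathcal J s+d$. I would take $V\coloneqq\tfrac12\|\widetilde e\|_2^2+\tfrac12\|s\|_2^2$, which is positive definite with $V(0)=0$. Differentiating along the trajectories, the indefinite cross terms $\widetilde e^\top s$ cancel and one obtains
\[ \dot V=-\|\widetilde e\|_2^2+\|s\|_2^2+s^\top\lambda(\cdot)-k\,s^\top\mathcal J s+s^\top d. \]
It then remains to bound the last three terms: $-k\,s^\top\mathcal J s\le-k\underline J\|s\|_2^2$ by Assumption \ref{ass:J_lower_bound}; using $\|\lambda(\cdot)\|_2\le L(\|\widetilde e\|_2+\|\widetilde v\|_2)$ (established just before the lemma) together with $\|\widetilde v\|_2\le\|s\|_2+\|\widetilde e\|_2$ gives $\|\lambda(\cdot)\|_2\le L(2\|\widetilde e\|_2+\|s\|_2)$, and Lemma \ref{lemma:basic_ineq} in the scalar case with weight $\rho$ applied to $\|s\|_2\|\widetilde e\|_2$ turns $s^\top\lambda(\cdot)$ into $\tfrac{L}{2\rho}\|\widetilde e\|_2^2+(2\rho L+L)\|s\|_2^2$; finally $\|d\|_2\le\widetilde d$ and Lemma \ref{lemma:basic_ineq} with unit weight give $s^\top d\le\tfrac14\|s\|_2^2+\widetilde d^2$. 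Collecting the coefficients yields exactly $\dot V\le-\alpha_1\|\widetilde e\|_2^2-\alpha_2\|s\|_2^2+\widetilde d^2$ with $\alpha_1,\alpha_2$ as in \eqref{eq:cosntants_alpha}.

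To conclude, the choice \eqref{eq:k_rho} makes $\alpha_1,\alpha_2>0$, so $\dot V\le-2\min\{\alpha_1,\alpha_2\}\,V+\widetilde d^2$; since $V(0)=0$, the comparison lemma gives $V(t)\le\widetilde d^2/(2\min\{\alpha_1,\alpha_2\})$ for all $t\ge0$. Then $\tfrac12\|\widetilde e(t)\|_2^2\le V(t)$ produces the radius in \eqref{eq:omega_1}, while $\tfrac12\|s(t)\|_2^2\le V(t)$ together with $\|\widetilde v\|_2=\|s-\widetilde e\|_2\le\|s\|_2+\|\widetilde e\|_2$ produces the radius in \eqref{eq:omega_2}; since \eqref{eq:kappa_law} is an admissible feedback and the estimate holds for every $d\in\mathcal D$, $\Omega_1$ and $\Omega_2$ are RCI in the sense of Definition \ref{def:RPI_set}. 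I expect the main obstacle to be spotting the coordinate change to $s$ that makes the $\widetilde e$--$\widetilde v$ coupling telescope, together with the precise bookkeeping of the weights in the Young-type splittings so that the constants come out exactly as in \eqref{eq:cosntants_alpha}; a secondary subtlety is justifying that the mean-value matrix $\mathcal J$, whose rows may be evaluated at different points of $\mathcal C(u,\overline u)$, still inherits the lower bound of Assumption \ref{ass:J_lower_bound}.
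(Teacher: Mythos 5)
Your proof is correct and is essentially the paper's own proof: your coordinate $s=\widetilde e+\widetilde v$ is exactly the backstepping variable $\zeta_2$ used there, the Lyapunov function $\tfrac12\|\widetilde e\|_2^2+\tfrac12\|s\|_2^2$, the Young-type splittings with weights $\rho$ and $1$, and the resulting constants $\alpha_1,\alpha_2$ all coincide, and your comparison-lemma finish is equivalent to the paper's ``derivative negative outside the ball of radius $\widetilde d/\sqrt{\min\{\alpha_1,\alpha_2\}}$, trajectory starts at the origin'' argument. The subtlety you flag about the mean-value matrix having rows evaluated at different points $\varpi_k$ is genuine, but the paper applies Assumption \ref{ass:J_lower_bound} to that matrix exactly as you propose to, so no divergence there either.
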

\begin{proof}
	A backstepping control design technique originally proposed in \cite{krstic1995nonlinear} will be adopted. The state $\widetilde{v}$ can be seen as a virtual input to be designed for system \eqref{eq:tilde_e} such that the function $\mathfrak{L}_1(\overline{e}) = \tfrac{1}{2} \|\widetilde{e}\|^2_{2}$ is always decreasing. The time derivative of $\mathfrak{L}_1$ along the solutions of system \eqref{eq:tilde_e}-\eqref{eq:tilde_v} is given by $\dot{\mathfrak{L}}_1(\widetilde{e}) = \widetilde{e}^\top \widetilde{v}$. Thus, by designing $\widetilde{v} = - \widetilde{e}$, it yields that $\dot{\mathfrak{L}}_1(\widetilde{e}) = - \|\widetilde{e}\|^2_{2} < 0$. Define the backstepping auxiliary errors $\zeta_1$, $\zeta_2 \in \mathbb{R}^n$ by $\zeta_1 \coloneqq \widetilde{e}$ and $\zeta_2 \coloneqq \widetilde{v}+\widetilde{e}$. Then, the auxiliary error dynamics are written as:
	\begin{subequations}
		\begin{align}
		\dot{\zeta}_1 & = -\zeta_1+\zeta_2, \label{eq:zeta_1} \\
		\dot{\zeta}_2 & = -\zeta_1+\zeta_2 + \lambda(\cdot) \notag \\ 
		&\hspace{19mm} + f(\overline{e}+\chi_{\scriptscriptstyle \rm d}, \overline{v}, u) - f(\overline{e}+\chi_{\scriptscriptstyle \rm d}, \overline{v}, \overline{u}) + d, \label{eq:zeta_2}
		\end{align}
	\end{subequations}
	with:
	\begin{align*}
	\|\lambda(\cdot)\|_{\scriptscriptstyle 2} & \le L(\|\widetilde{e}\|_{\scriptscriptstyle 2}+\|\widetilde{v}\|_{\scriptscriptstyle 2}) \\
	& = L(\|\zeta_1\|_{\scriptscriptstyle 2} +\|\zeta_1-\zeta_2\|_{\scriptscriptstyle 2}) \\
	& \le 2 L \|\zeta_1\|_{\scriptscriptstyle 2} + L \|\zeta_2\|_{\scriptscriptstyle 2},
	\end{align*}
	 and $\zeta_1(0) = \zeta_2(0) = 0$. Define the stack vector $\zeta \coloneqq [\zeta_1^\top, \zeta_2^\top]^\top \in \mathbb{R}^{2n}$ and consider the candidate Lyapunov function $\mathfrak{L}(\zeta) = \tfrac{1}{2} \|\zeta\|_2^2$ with $\mathfrak{L}(0) = 0$. The time derivative of $\mathfrak{L}$ along the trajectories of system \eqref{eq:zeta_1}-\eqref{eq:zeta_2} is given by:
	\begin{align*}
	\dot{\mathfrak{L}}(\zeta) & = \zeta^\top \dot{\zeta} = \zeta_1^\top \dot{\zeta}_1 + \zeta_2^\top \dot{\zeta}_2 \\
	& \le - \|\zeta_1\|_2^2+(L+1)\|\zeta_2\|_2^2 + 2 L \|\zeta_1\|_{\scriptscriptstyle 2}  \|\zeta_2\|_{\scriptscriptstyle 2} + \zeta_2^\top d \\
	& \hspace{43mm} +\zeta_2^\top \left[f(\overline{e}+\chi_{\scriptscriptstyle \rm d}, \overline{v}, u) - f(\overline{e}+\chi_{\scriptscriptstyle \rm d}, \overline{v}, \overline{u})\right].
	\end{align*} 
	By using Lemma \ref{lemma:basic_ineq} for $B = I_2$ we have: 
	\begin{align*}
	\|\zeta_1\|_{\scriptscriptstyle 2} \|\zeta_2\|_{\scriptscriptstyle 2} & \le \tfrac{\|\zeta_1\|_2^2}{4 \rho}+\rho\|\zeta_2\|_2^2, \\ 
	\zeta_2^\top d & \le \tfrac{\|\zeta_2\|_2^2}{4}+\|d\|_2^2,
	\end{align*}
	with $\rho$ as given in \eqref{eq:k_rho}. Then, it holds that:
	\begin{align*}
	\dot{\mathfrak{L}}(\zeta) \le & - \left(1- \tfrac{L}{2 \rho} \right) \|\zeta_1\|_2^2 + \left[(1 + 2\rho) L +\tfrac{5}{4}\right] \|\zeta_2\|_2^2 \notag \\
	&\hspace{29mm} +\zeta_2^\top \left[f(\overline{e}+\chi_{\scriptscriptstyle \rm d}, \overline{v}, u) - f(\overline{e}+\chi_{\scriptscriptstyle \rm d}, \overline{v}, \overline{u})\right] +\widetilde{d}^2.
	\end{align*}
	According to Proposition \ref{pro:mvt}, and due to the fact that the set $\mathcal{U}$ is convex, i.e., $\mathcal{C}(u, \overline{u}) \subseteq \mathcal{U}$, there exist constant vectors $\varpi_1$, $\dots$, $\varpi_n \in \mathcal{C}(u,\overline{u})$ such that:
	\begin{align*}
	f(\overline{e}+\chi_{\scriptscriptstyle \rm d}, \overline{v}, u) - f(\overline{e}+\chi_{\scriptscriptstyle \rm d}, \overline{v}, \overline{u}) & = \sum_{i = 1}^{n} \sum_{j=1}^{n} \mathfrak{e}_n^i (\mathfrak{e}_n^j)^\top \tfrac{\partial f_{i}(\overline{e}+\chi_{\scriptscriptstyle \rm d}, \overline{v},\varpi_k)}{\partial u_j} (u-\overline{u}) \\ 
	& = J(\overline{e}+\chi_{\scriptscriptstyle \rm d}, \overline{v},\cdot) (u-\overline{u}).	
	\end{align*}
    Then, we get:
	\begin{align*}
	& \dot{\mathfrak{L}}(\zeta) \le - \left(1- \tfrac{L}{2 \rho} \right) \|\zeta_1\|_2^2 + \left[(1 + 2\rho) L +\tfrac{5}{4}\right] \|\zeta_2\|_2^2 \notag \\
	&\hspace{41mm}+\zeta_2^\top J(\overline{e}+\chi_{\scriptscriptstyle \rm d}, \overline{v},\cdot) (u-\overline{u}) +\widetilde{d}^2.
	\end{align*}
	By designing the feedback control law as $u-\overline{u}$ $= \kappa(e,v,\overline{e},\overline{v})$ $= -k \zeta_2 = -k\widetilde{e}-k\widetilde{v}$ $= -k(e-\overline{e})-k(v-\overline{v})$, which is the same as in \eqref{eq:kappa_law} and compatible with \eqref{eq:control_law_u}, we get:
	\begin{align*}
	\dot{\mathfrak{L}}(\zeta) & \le - \left(1- \tfrac{L}{2 \rho} \right) \|\zeta_1\|_2^2 + \left[(1 + 2\rho) L +\tfrac{5}{4}\right] \|\zeta_2\|_2^2 -k \zeta_2^\top J(\overline{e}+\chi_{\scriptscriptstyle \rm d}, \overline{v},\cdot) \zeta_2 +\widetilde{d}^2 \\
	& \le - \left(1- \tfrac{L}{2 \rho} \right) \|\zeta_1\|_2^2 + \left[(1 + 2\rho) L +\tfrac{5}{4}\right] \|\zeta_2\|_2^2 -k \lambda_{\scriptscriptstyle \min}\left[\tfrac{J(\cdot)+J(\cdot)^\top}{2}\right]\|\zeta_2\|_2^2 +\widetilde{d}^2,
	\end{align*}
	which by using \eqref{eq:J_lower_bound} of Assumption \ref{ass:J_lower_bound} becomes:
	\begin{align*}
	\mathfrak{L}(\zeta) & \le - \alpha_1 \|\zeta_1\|_2^2 - \alpha_2 \|\zeta_2\|_2^2 +\widetilde{d}^2 \\
	& \le -\min\{\alpha_1, \alpha_2 \} \|\zeta\|_2^2 + \widetilde{d}^2,
	\end{align*}
	with $\alpha_1$, $\alpha_2$ given in \eqref{eq:cosntants_alpha}. Thus, $\dot{L}(\zeta) < 0$ when $\|\zeta\|_{\scriptscriptstyle 2} > \tfrac{\widetilde{d}}{\sqrt{\min\{\alpha_1,\alpha_2}\}}$. Taking the latter into consideration and the fact that $\zeta(0) = 0$ we have that $\|\zeta(t)\|_{\scriptscriptstyle 2} \le \tfrac{\widetilde{d}}{\sqrt{\min\{\alpha_1,\alpha_2}\}}$, $\forall t \ge 0$. In order to transform the bounds of the auxiliary states $\zeta_1$, $\zeta_2$ and $\zeta$ into bounds on the states $\widetilde{e}$, $\widetilde{v}$, consider the following inequalities:
	\begin{align*}
	\|\widetilde{e}\|_{\scriptscriptstyle 2} & = \|\zeta_1\|_{\scriptscriptstyle 2} \le \|\zeta\|_{\scriptscriptstyle 2}, \\
	\Rightarrow \|\widetilde{e}(t)\|_{\scriptscriptstyle 2} &\le \tfrac{\widetilde{d}}{\sqrt{\min\{\alpha_1,\alpha_2}\}}, \forall t \ge 0,
	\end{align*}
	and:
	\begin{align*}
	\Big| \|\widetilde{v}\|_{\scriptscriptstyle 2} - \|\widetilde{e}\|_{\scriptscriptstyle 2} \Big| & \le \|\widetilde{v}+\widetilde{e}\|_{\scriptscriptstyle 2}  = \|\zeta_2\|_{\scriptscriptstyle 2} \le \|\zeta\|_{\scriptscriptstyle 2} \\
	\Rightarrow \|\widetilde{v}(t)\|_{\scriptscriptstyle 2} & \le \tfrac{2 \widetilde{d}}{\sqrt{\min\{\alpha_1,\alpha_2}\}}, \forall t \ge 0,
	\end{align*}
	which concludes the proof.
\end{proof}

\begin{remark}
	According to Lemma \ref{lemma:rci_set}, the volume of the hyper-tube which is centered along the nominal trajectories $\overline{e}(t)$, $\overline{v}(t)$ of system \eqref{eq:nom_error_kin}-\eqref{eq:nom_error_dyn}, depends on $\widetilde{d}$, which is the upper bound of the disturbances $d$, and on $L$, $\underline{J}$, which are the bounds of the derivatives of $f$. By tuning the parameters $k$, $\rho$ from \eqref{eq:k_rho} appropriately, the volume of the tube can be adjusted.
\end{remark}

\begin{assumption} \label{ass:feasible_sol}
	It holds that:
	\begin{align*}
	\inf_{\substack{m,m' \in \mathcal{M}, \\ m \neq m'}}\|p_m-p_{m'}\|_{\scriptscriptstyle 2} > 2\mathfrak{r} + \tfrac{2 \widetilde{d}}{\sqrt{\min\{\alpha_1,\alpha_2}\}},
	\end{align*}
	i.e., there is sufficient space between any RoI such that the robot can navigate without intersecting them.
\end{assumption}

\begin{remark}
Assumptions \ref{ass:f_diff}, \ref{ass:stabiliz} are standard assumptions required for the NMPC nominal stability to be guaranteed (see \cite{frank_1998_quasi_infinite}). Assumption \ref{ass:J_lower_bound} is a sufficient controllability condition for nonlinear systems in non-affine form (see \cite{adaptive_non_affine}). Assumption \ref{ass:feasible_sol} is a geometric condition that the workspace should fulfill in order for the robot to be able to navigate between RoI and meet the time constraints imposed by the desired formula~$\varphi$.
\end{remark}

Consider a sequence of sampling times $\{t_k\}$, $k \in \mathbb{N}$, with a constant sampling period $0 < h < T$, where $T$ is a \emph{finite prediction horizon} such that $t_{k+1} \coloneqq t_{k} + h$, $\forall k \in \mathbb{N}$. At every discrete sample time, a FHOCP is solved as follows:
\begin{subequations}
\begin{align}
&\hspace{-7mm}\min\limits_{\overline{u}(\cdot)} \left\{  \|\overline{\xi}(t_k+T)\|^2_{\scriptscriptstyle P} + \int_{t_k}^{t_k+T}\Big[ \|\overline{\xi}(\mathfrak{s})\|^2_{\scriptscriptstyle Q} +\|\overline{u}(\mathfrak{s})\|^2_{\scriptscriptstyle R} \Big] d\mathfrak{s} \right\} \label{eq:mpc_cost_function} \\
&\hspace{-6mm}\text{subject to:} \notag \\
&\hspace{-3mm} \dot{\overline{\xi}}(\mathfrak{s}) = g(\overline{\xi}(\mathfrak{s}), \overline{u}(\mathfrak{s})), \ \ \overline{\xi}(t_k) = \xi(t_k), \label{eq:diff_mpc} \\
&\hspace{-3mm} \overline{\xi}(\mathfrak{s}) \in \overline{\mathcal{E}} \times \overline{\mathcal{V}}, \ \ \overline{u}(\mathfrak{s}) \in \overline{\mathcal{U}},  \ \ \forall \mathfrak{s} \in [t_k,t_k+T], \label{eq:mpc_constrained_set} \\
&\hspace{-3mm} \overline{\xi}(t_k+T)\in \mathcal{F}, \label{eq:mpc_terminal_set}
\end{align}
\end{subequations}
where $\xi \coloneqq [e^\top, v^\top]^\top \in \mathbb{R}^{2n}$, $g(\xi,u)$ $\coloneqq \begin{bmatrix} v \\ f(e+\chi_{\scriptscriptstyle \rm d}, v, u) \end{bmatrix}$; $Q$, $P \in \mathbb{R}^{2n \times 2n}$ and $R \in \mathbb{R}^{n \times n}$ are positive definite gain matrices to be appropriately tuned. We will explain hereafter the sets $\overline{\mathcal{E}}$, $\overline{\mathcal{V}}$, $\overline{\mathcal{U}}$ and $\mathcal{F}$.

\begin{figure*}
	\begin{tikzpicture}[scale = 0.90] 
	% convert formulas phi_i to buchis A_i
	\draw[blue!70, line width=.04cm] (-16.8, 6.0) rectangle +(2.7, 0.9);
	\node at (-15.45, 6.45) {$\text{MITL2TBA}$};
	
	\draw[-latex, draw=black, line width = 1.0] (-15.5,6.0) -- (-15.5,4.6);
	\draw[-latex, draw=black, line width = 1.0] (-15.5,7.6) -- (-15.5,6.9);
	
	\node at (-15.5, 8.00) {$\varphi$};
	\node at (-15.0, 5.35) {$\mathcal{A}$};
	\node at (-15.5, 4.45) {$\otimes$};
	
	%\node at (-15.5, 3.95) {$\vdots$};
	
	% create \tilde T_i
	\draw[-latex, draw=black, line width = 1.0] (-15.20,4.45) -- (-14.0,4.45);
	
	\node at (-13.6, 4.47) {$\widetilde{\mathcal{T}}$};
	
	% create T_i
	\draw[-latex, draw=black, line width = 1.0] (-16.60,4.45) -- (-15.7,4.45);
	\node at (-17.2, 4.47) {$\mathcal{T}$};
	
	% create box algorithm
	\draw[-latex, draw=black, line width = 1.0] (-13.20,4.45) -- (-12.5,4.45);
	\draw[red!70, line width=.04cm] (-12.5, 4.15) rectangle +(2.20, 0.70);
	\node at (-11.32, 4.50) {$\text{synthesis}$};
	
	% create runs
	\draw[-latex, draw=black, line width = 1.0] (-10.25,4.45) -- (-9.75,4.45);
	
	\node at (-9.35, 4.47) {$\widetilde{r}$};
	
	% create abstractions box
	\draw[-latex, draw=black, line width = 1.0] (-18.45,4.47) -- (-17.90,4.47);
	\draw[orange!70, line width=.04cm] (-21.0, 4.20) rectangle +(2.5, 0.7);
	\node at (-19.70, 4.60) {$\text{abstraction}$};
	
	\draw[-latex, draw=black, line width = 1.0] (-21.70,4.47) -- (-21.00,4.47);
	
	% create the dynamics
	\node at (-24.0, 4.50) {$\hspace{-5mm}\displaystyle \dot{\chi}= v$};
	\node at (-24.0, 4.00) {$\hspace{13mm} \dot{v} = f(x,v,u)+d$};
	
	% create control inputs v_1,...,v_N
	\draw [black, line width = 0.030cm] (-9.35, 4.80) -- (-9.35, 8.50);
	\draw [black, line width = 0.030cm] (-9.35, 8.50) -- (-18.90, 8.50);
	\draw [black, line width = 0.030cm] (-20.20, 8.50) -- (-24.00, 8.50);
	\draw[-latex, draw=black, line width = 1.0] (-24.00, 8.50) -- (-24.00, 5.1);
	
	\draw[green!70, line width=.04cm] (-20.2, 7.95) rectangle +(1.3, 1.0);
	\node at (-19.5, 8.40) {$u(x,v)$};
	\end{tikzpicture}
	\caption{A graphical illustration of the combined abstraction and controller synthesis framework.}
	\label{fig:solution_scheme}
	\vspace{-2mm}
\end{figure*}
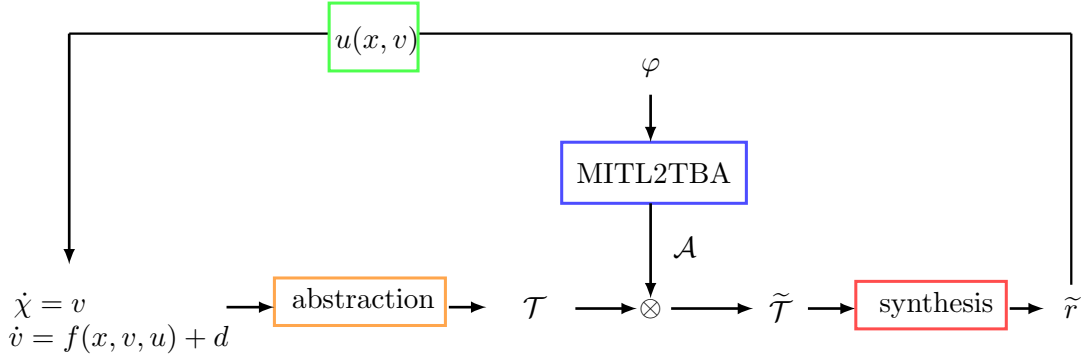

In order to guarantee that while the FHOCP \eqref{eq:mpc_cost_function}-\eqref{eq:mpc_terminal_set} is solved for the nominal dynamics \eqref{eq:nom_error_kin}-\eqref{eq:nom_error_dyn}, the real states $e$, $v$ and control inputs $u$ satisfy the corresponding state $\mathcal{E}$, $\mathcal{V}$ and input constraints $\mathcal{U}$, respectively, the following modification is performed:
\begin{align*}
\overline{\mathcal{E}} \coloneqq \mathcal{E} \ominus \Omega_1, \ \ \overline{\mathcal{V}} \coloneqq \mathcal{V} \ominus \Omega_2, \ \ \overline{\mathcal{U}} \coloneqq \mathcal{U} \ominus \left[ \Lambda \circ \overline{\Omega} \right],
\end{align*}
with $\Lambda \coloneqq {\rm diag}\{-k I_n, -k I_n\} \in \mathbb{R}^{2n \times 2n}$, $\overline{\Omega} \coloneqq \Omega_1 \times \Omega_2$, the operators $\ominus$, $\circ$ as defined in Section \ref{sec:notation_preliminaries}, and $\Omega_1$, $\Omega_2$ as given in \eqref{eq:omega_1}, \eqref{eq:omega_2}, respectively. Intuitively, the sets $\mathcal{E}$, $\mathcal{V}$ and $\mathcal{U}$ are tightened accordingly, in order to guarantee that while the nominal states $\overline{e}$, $\overline{v}$ and the nominal control input $\overline{u}$ are calculated, the corresponding real states $e$, $v$ and real control input $u$ satisfy the state and input constraints $\mathcal{E}$, $\mathcal{V}$ and $\mathcal{U}$, respectively. This constitutes a standard constraints set modification technique adopted in tube-based NMPC frameworks (for more details see \cite{yu_2013_tube}, \cite{alex_IJRNC_2018}). Define the \emph{terminal set} by:
\begin{align} \label{eq:terminal_set_F}
\mathcal{F} \coloneqq \big\{\overline{\xi} \in \overline{\mathcal{E}} \times \overline{\mathcal{V}} : \|\overline{\xi}\|_{\scriptscriptstyle P} \le \epsilon \big\}, \ \  \epsilon > 0,
\end{align}
which is used to enforce the stability of the system \cite{frank_1998_quasi_infinite}. In particular, supposing that Assumption \ref{ass:f_diff}, \ref{ass:stabiliz} hold, it can be proven that (see \cite[Lemma 1, p. 4]{frank_1998_quasi_infinite}), for a sufficiently small $\epsilon$, there exists a \emph{local controller} $u_{\scriptscriptstyle \rm loc} \coloneqq K \overline{\xi} \in \overline{\mathcal{U}}$, $K > 0$ which guarantees that:
\begin{align*}
\tfrac{d}{dt}\left(\|\overline{\xi}\|^2_{\scriptscriptstyle P}\right) \le -\|\overline{\xi}\|^2_{\scriptscriptstyle \widetilde{Q}}, \forall \overline{\xi} \in \mathcal{F},
\end{align*}
with $\widetilde{Q} \coloneqq Q+K^\top R K > 0$. The following theorem guarantees the robust robot navigation from RoI $\mathcal{R}_{\scriptscriptstyle \rm s}$ to RoI $\mathcal{R}_{\scriptscriptstyle \rm d}$ without intersecting any other RoI and always remaining in the workspace $\mathcal{W}$.

\begin{theorem} \label{theorem_main}
	Suppose that Assumptions \ref{ass:f_diff}-\ref{ass:feasible_sol} hold. Let $\mathfrak{t}_{\scriptscriptstyle \rm s} \ge 0$ be the time at which the robot occupies RoI $\mathcal{R}_{\scriptscriptstyle \rm s}$ and $\chi_{\scriptscriptstyle \rm d}$ be the center of a desired RoI $\mathcal{R}_{\scriptscriptstyle \rm d}$. Suppose also that the FHOCP \eqref{eq:mpc_cost_function}-\eqref{eq:mpc_terminal_set} is feasible at time $\mathfrak{t}_{\scriptscriptstyle \rm s}$. Then, the feedback control law \eqref{eq:control_law_u} applied to the system \eqref{eq:unc_error_kin}-\eqref{eq:unc_error_dyn} guarantees that there exists a finite time $\mathfrak{t}_{\scriptscriptstyle \rm d} > \mathfrak{t}_{\scriptscriptstyle \rm s}$ such that $\forall t \ge \mathfrak{t}_{\scriptscriptstyle \rm d}$ it holds that:
	\begin{subequations}
		\begin{align}
		\hspace{0mm} \|\chi(t)-\chi_{\scriptstyle \rm d}\|_{\scriptscriptstyle 2} & \le \tfrac{\epsilon}{\sqrt{\lambda_{\scriptscriptstyle \min}(P)}} + \tfrac{\widetilde{d}}{\sqrt{\min\{\alpha_1, \alpha_2\}}}, \label{eq:theom_ineq_1} \\
		\hspace{0mm} \|v(t)\|_{\scriptscriptstyle 2} & \le \tfrac{\epsilon}{\sqrt{\lambda_{\scriptscriptstyle \min}(P)}} + \tfrac{2 \widetilde{d}}{\sqrt{\min\{\alpha_1, \alpha_2\}}}. \label{eq:theom_ineq_2}
		\end{align}
	\end{subequations}
\end{theorem}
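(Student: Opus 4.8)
The plan is to split the error $\xi=[e^\top,v^\top]^\top$ into its nominal part $\overline\xi=[\overline e^\top,\overline v^\top]^\top$, driven by the FHOCP through the nominal dynamics \eqref{eq:nom_error_kin}--\eqref{eq:nom_error_dyn}, and the deviation $[\widetilde e^\top,\widetilde v^\top]^\top$ of \eqref{eq:tilde_e}--\eqref{eq:tilde_v} under the backstepping law $\kappa(\cdot)$, to bound the two contributions separately and recombine them by the triangle inequality. The first step is \emph{recursive feasibility} of \eqref{eq:mpc_cost_function}--\eqref{eq:mpc_terminal_set}: starting from feasibility at $\mathfrak{t}_{\scriptscriptstyle \rm s}$, the usual warm start at the next sampling instant --- the shifted tail of the previous optimizer concatenated, inside the terminal set $\mathcal{F}$, with the local controller $u_{\scriptscriptstyle \rm loc}=K\overline\xi\in\overline{\mathcal U}$ (which exists by Assumptions~\ref{ass:f_diff}--\ref{ass:stabiliz} and makes $\tfrac{d}{dt}\|\overline\xi\|_P^2\le-\|\overline\xi\|_{\widetilde Q}^2$ on $\mathcal{F}$) --- is admissible for the tightened sets $\overline{\mathcal E},\overline{\mathcal V},\overline{\mathcal U}$, so the FHOCP stays feasible for all $t\ge\mathfrak{t}_{\scriptscriptstyle \rm s}$; this is the quasi-infinite-horizon NMPC setting of \cite{frank_1998_quasi_infinite,mayne_2000_nmpc}.

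Next I would use the optimal value function $V$ of the FHOCP as a Lyapunov function for the nominal closed loop: along sampled trajectories $V(\overline\xi(t_{k+1}))-V(\overline\xi(t_k))\le-\int_{t_k}^{t_{k+1}}\big(\|\overline\xi(\mathfrak s)\|_Q^2+\|\overline u(\mathfrak s)\|_R^2\big)\,d\mathfrak s\le 0$, and since $V\ge 0$ the telescoped stage cost is summable; Barbalat's lemma (with $t\mapsto\|\overline\xi(t)\|_Q^2$ uniformly continuous because $\dot{\overline\xi}$ is bounded on the compact constraint sets) then gives $\overline\xi(t)\to 0$, and together with monotonicity of $V$, continuity of $V$ at the origin, and the quadratic lower bound $V(\overline\xi)\ge c\|\overline\xi\|_2^2$ (from $Q>0$), it follows that there is a finite $\mathfrak{t}_{\scriptscriptstyle \rm d}>\mathfrak{t}_{\scriptscriptstyle \rm s}$ with $\overline\xi(t)\in\mathcal{F}$, i.e.\ $\|\overline\xi(t)\|_P\le\epsilon$, for every $t\ge\mathfrak{t}_{\scriptscriptstyle \rm d}$; strictness $\mathfrak t_{\scriptscriptstyle \rm d}>\mathfrak t_{\scriptscriptstyle \rm s}$ holds because $\|e(\mathfrak t_{\scriptscriptstyle \rm s})\|_2$ is bounded away from $0$ by Assumption~\ref{ass:feasible_sol}. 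Hence $\|\overline e(t)\|_2,\ \|\overline v(t)\|_2\le\|\overline\xi(t)\|_2\le\epsilon/\sqrt{\lambda_{\scriptscriptstyle \min}(P)}$ for all $t\ge\mathfrak t_{\scriptscriptstyle \rm d}$.

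For the deviation, the FHOCP initialization $\overline\xi(\mathfrak t_{\scriptscriptstyle \rm s})=\xi(\mathfrak t_{\scriptscriptstyle \rm s})$ gives $\widetilde e(\mathfrak t_{\scriptscriptstyle \rm s})=\widetilde v(\mathfrak t_{\scriptscriptstyle \rm s})=0$, so Lemma~\ref{lemma:rci_set} applied with the gain $k$ from \eqref{eq:k_rho} yields $\|\widetilde e(t)\|_2\le\widetilde d/\sqrt{\min\{\alpha_1,\alpha_2\}}$ and $\|\widetilde v(t)\|_2\le 2\widetilde d/\sqrt{\min\{\alpha_1,\alpha_2\}}$ for all $t\ge\mathfrak t_{\scriptscriptstyle \rm s}$. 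Then $\|\chi(t)-\chi_{\scriptscriptstyle \rm d}\|_2=\|e(t)\|_2\le\|\overline e(t)\|_2+\|\widetilde e(t)\|_2$ and $\|v(t)\|_2\le\|\overline v(t)\|_2+\|\widetilde v(t)\|_2$ give \eqref{eq:theom_ineq_1}--\eqref{eq:theom_ineq_2} for all $t\ge\mathfrak t_{\scriptscriptstyle \rm d}$. Admissibility of the trajectory for the original problem --- $\chi\in\mathcal X$ (the robot stays in $\mathcal W$ and meets no other RoI), $v\in\mathcal V$, $u\in\mathcal U$ --- follows from $e=\overline e+\widetilde e$ with $\overline e\in\overline{\mathcal E}=\mathcal E\ominus\Omega_1$ and $\widetilde e\in\Omega_1$ (and likewise for $v$, and $u=\overline u+\kappa$ with $\overline u\in\overline{\mathcal U}$, $\kappa\in\Lambda\circ\overline\Omega$) through the identity $(\mathcal S\ominus\mathcal S')\oplus\mathcal S'\subseteq\mathcal S$, with Assumption~\ref{ass:feasible_sol} ensuring the tightened sets are nonempty; for $\epsilon$ small, \eqref{eq:theom_ineq_1} moreover yields $\mathcal B(\chi(t),\mathfrak r)\subsetneq\mathcal R_{\scriptscriptstyle \rm d}$.

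The step I expect to be the main obstacle is the nominal recursive-feasibility/convergence argument in the \emph{sampled-data, re-initialized} setting: one has to check that propagating the nominal prediction from the true, disturbed state $\xi(t_k)$ at every sampling instant still admits the shifted warm start --- which forces the tightened sets $\overline{\mathcal E},\overline{\mathcal V},\overline{\mathcal U}$ and the RCI bound $(\widetilde e,\widetilde v)\in\overline\Omega$ of Lemma~\ref{lemma:rci_set} to be used jointly --- and that the finite-horizon cost drops by a uniformly positive amount per step while $\overline\xi\notin\mathcal F$, so that $\mathcal F$ is reached in finite time; the remaining steps are routine norm and Minkowski-set manipulations.
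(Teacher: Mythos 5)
Your proposal is correct and follows essentially the same route as the paper's proof: decompose $e=\overline e+\widetilde e$, $v=\overline v+\widetilde v$, bound the deviation via the RCI sets $\Omega_1,\Omega_2$ of Lemma~\ref{lemma:rci_set}, establish recursive feasibility and finite-time convergence of the nominal state $\overline\xi$ to $\mathcal F$ via standard quasi-infinite-horizon NMPC arguments, and combine by the triangle inequality with $\|\overline\xi\|_P\le\epsilon\Rightarrow\|\overline\xi\|_2\le\epsilon/\sqrt{\lambda_{\min}(P)}$. The only difference is that you spell out the feasibility and value-function/Barbalat steps that the paper delegates entirely to citations of \cite{frank_1998_quasi_infinite} and \cite{alex_IJRNC_2018}, and you correctly identify the sampled-data re-initialization issue as the genuinely delicate point that the paper also leaves implicit.
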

\begin{proof}
	The proof of the theorem consists of two parts:
	
	\noindent \textbf{Feasibility Analysis}: It can be shown that recursive feasibility is established and it implies subsequent feasibility. The proof of this part is similar to the feasibility proof of \cite[Theorem 2, Sec. 4, p. 12]{alex_IJRNC_2018}, and it is omitted here due to space constraints.
	
	\noindent \textbf{Convergence Analysis}: Recall that $e = \chi-\chi_{\scriptscriptstyle \rm d}$, $\widetilde{e} = e-\overline{e}$ and $\widetilde{v} = v-\overline{v}$. Then, we get:
	\begin{align*}
	\|\chi(t)-\chi_{\scriptstyle \rm d}\|_{\scriptscriptstyle 2} & \le \|\overline{e}(t)\|_{\scriptscriptstyle 2} + \|\widetilde{e}(t)\|_{\scriptscriptstyle 2}, \\ \|v(t)\|_{\scriptscriptstyle 2} & \le \|\overline{v}(t)\|_{\scriptscriptstyle 2} + \|\widetilde{v}(t)\|_{\scriptscriptstyle 2},
	\end{align*}
	which by using
	the fact that $\|\overline{e}\|_{\scriptscriptstyle 2}$, $\|\overline{v}\|_{\scriptscriptstyle 2}$ $\le \|\overline{\xi}\|_{\scriptscriptstyle 2}$ it becomes:
	\begin{align*}
	\|\chi(t)-\chi_{\scriptstyle \rm d}\|_{\scriptscriptstyle 2} & \le \|\overline{\xi}(t)\|_{\scriptscriptstyle 2} + \|\widetilde{e}(t)\|_{\scriptscriptstyle 2}, \\ \|v(t)\|_{\scriptscriptstyle 2} & \le \|\overline{\xi}(t)\|_{\scriptscriptstyle 2} + \|\widetilde{v}(t)\|_{\scriptscriptstyle 2}.
	\end{align*}
	Moreover, by using the bounds from \eqref{eq:omega_1}, \eqref{eq:omega_2} the latter inequalities become:
	\vspace{-4mm}
	\begin{subequations}
		\begin{align}
		\|\chi(t)-\chi_{\scriptstyle \rm d}\|_{\scriptscriptstyle 2} & \le \|\overline{\xi}(t)\|_{\scriptscriptstyle 2} +  \tfrac{ \widetilde{d}}{\sqrt{\min\{\alpha_1, \alpha_2\}}}, \label{eq:conv_1}\\
		\|v(t)\|_{\scriptscriptstyle 2} & \le \|\overline{\xi}(t)\|_{\scriptscriptstyle 2} +  \tfrac{2 \widetilde{d}}{\sqrt{\min\{\alpha_1, \alpha_2\}}}, \forall t \ge \mathfrak{t}_{\scriptscriptstyle \rm s}. \label{eq:conv_2}
		\end{align}
	\end{subequations}
	The nominal state $\overline{\xi}$ is controlled by the nominal control action $\overline{u} \in \overline{\mathcal{U}}$ which is the outcome of the solution to the  FHOCP \eqref{eq:mpc_cost_function}-\eqref{eq:mpc_terminal_set} for the nominal dynamics \eqref{eq:nom_error_kin}-\eqref{eq:nom_error_dyn}. Hence, by invoking nominal NMPC stability results found in \cite{frank_1998_quasi_infinite, alex_med_2017}, the state $\overline{\xi}(t)$ is driven to terminal set $\mathcal{F}$, given in \eqref{eq:terminal_set_F}, in finite time, and it remains there for all times. Thus, there exist a finite time $\mathfrak{t}_{\scriptscriptstyle \rm d} > \mathfrak{t}_{\scriptscriptstyle \rm s}$ such that $\overline{\xi}(t) \in \mathcal{F}$, $\forall t \ge \mathfrak{t}_{\scriptscriptstyle \rm d}$. From \eqref{eq:terminal_set_F}, the latter implies that: $$\|\overline{\xi}(t)\|_{\scriptscriptstyle P} \le \epsilon, \forall t \ge \mathfrak{t}_{\scriptstyle \rm d} \Rightarrow \|\overline{\xi}(t)\|_{\scriptscriptstyle 2} \le \tfrac{\epsilon}{\sqrt{\lambda_{\scriptscriptstyle \min}(P)}}, \forall t \ge \mathfrak{t}_{\scriptstyle \rm d}.$$
	The latter implication combined by \eqref{eq:conv_1}-\eqref{eq:conv_2} leads to the conclusion of the proof.
\end{proof}

Theorem \ref{theorem_main} implies that the robot with dynamics as in \eqref{eq:kinematics}-\eqref{eq:unc_dynamics}, starting at time $\mathfrak{t}_{\scriptscriptstyle \rm s}$ in RoI $\mathcal{R}_{\scriptscriptstyle \rm s}$, is driven by the controller \eqref{eq:control_law_u} towards a desired RoI  $\mathcal{R}_{\scriptscriptstyle \rm d}$, while all constraints imposed to the system are satisfied, i.e., the robot does not intersects with other RoI and always remains in the workspace $\mathcal{W}$. Moreover, by observing \eqref{eq:theom_ineq_1} it holds that at time $\mathfrak{t}_{\scriptscriptstyle \rm d}$ the error $\|\chi(t)-\chi_{\scriptscriptstyle \rm d}\|_2$ has reached the steady-state, i.e., the robot has been navigated to the desired RoI $\mathcal{R}_{\scriptscriptstyle \rm d}$ at time $\mathfrak{t}_{\scriptscriptstyle \rm d}$. Recalling \eqref{eq:desired_times_T} and taking into consideration the aforementioned discussion, $\mathfrak{t}_{\scriptscriptstyle \rm d}$ models the time that the robot needs to be driven from $\mathcal{R}_{\scriptscriptstyle \rm s}$ to $\mathcal{R}_{\scriptscriptstyle \rm d}$, i.e., $\mathfrak{t}_{\scriptscriptstyle \rm d} = \mathfrak{t}(\mathcal{R}_{\scriptscriptstyle \rm s}, \mathcal{R}_{\scriptscriptstyle \rm d})$, and it can be computed by Algorithm $1$. Intuitively, as time evolves, the norm of the states of the robot robot is being monitored at each sampling time $t_k$ until the inequalities of line $7$ of Algorithm $1$ are satisfied, i.e., when the trajectory of the robot has reached the steady-state. When they are satisfied, the robot is within RoI $\mathcal{R}_{\scriptscriptstyle \rm d}$ and the time constant $\mathfrak{t}_{\scriptscriptstyle \rm d} = \mathfrak{t}(\mathcal{R}_{\scriptscriptstyle \rm s}, \mathcal{R}_{\scriptscriptstyle \rm d})$ has been computed.

\begin{algorithm}[t!]
	\caption{Computation of $\mathfrak{t}_{\scriptscriptstyle \rm d} \coloneqq \mathfrak{t}(\mathcal{R}_{\scriptscriptstyle \rm s}, \mathcal{R}_{\scriptscriptstyle \rm d})$}
	\begin{algorithmic}[1]
		\STATE \textbf{Input}: $\mathfrak{t}_{\scriptscriptstyle \rm s}$, $\overline{\chi}(t_k)$, $\overline{v}(t_k)$, $k \in \mathbb{N}$;
		\STATE \textbf{Output}:  $\mathfrak{t}_{\scriptscriptstyle \rm d}$;
		\STATE $t_k \leftarrow \mathfrak{t}_{\scriptscriptstyle \rm s}$;
		\STATE $\rm flag \leftarrow 1$
		\WHILE {$\rm flag = 1$}
		\STATE measure $\overline{\chi}(t_k)$, $\overline{v}(t_k)$;
		\IF {$\|\overline{\chi}(t_k)-\chi_{\scriptscriptstyle d}\|_2 \le \tfrac{\epsilon}{\sqrt{\lambda_{\scriptscriptstyle \min}(P)}}$ \textbf{and} $\|\overline{v}(t_k)\|_2 \le \tfrac{\epsilon}{\sqrt{\lambda_{\scriptscriptstyle \min}(P)}}$} 
		\STATE ${\rm flag} \leftarrow 0$; ${\rm break}$; 
		\STATE \textbf{Go to} ``line $12$"  
		\ENDIF
		\STATE $t_k \leftarrow t_k + h$; 
		\ENDWHILE		
		\STATE $\mathfrak{t}_{\scriptscriptstyle \rm d} \leftarrow t_k$;   
	\end{algorithmic} 
\end{algorithm}

\subsection{Discrete System Abstraction} \label{sec:discrete_system_abstraction}

We have provided so far a feedback control law that drives the robot with dynamics as in \eqref{eq:kinematics}-\eqref{eq:unc_dynamics} from RoI $\mathcal{R}_{\scriptscriptstyle \rm s}$ to RoI $\mathcal{R}_{\scriptscriptstyle \rm d}$ within time $\mathfrak{t}(\mathcal{R}_{\scriptscriptstyle \rm s}, \mathcal{R}_{\scriptscriptstyle \rm d})$. The abstraction that captures the dynamics of the robot into a WTS is given through he following definition:

\begin{definition} \label{def:WTS_abstraction}
	The motion of the robot in the workspace $\mathcal{W}$ is modeled by the WTS $\ \mathcal{T}$ $= (S$, $S_{0}$, $\rm Act$, $\longrightarrow$, $\mathfrak{t}$, $\Sigma$, $L)$ where: 
	\begin{itemize}
		\item $S = \mathcal{R} = \bigcup_{m \in \mathcal{M}} \mathcal{R}_m$ is the set of states of the robot that contains all the RoI of the workspace $\mathcal{W}$; 
		\item $S_{0} \subseteq S$ is a set of initial states defined by the robot' s initial position $\chi(0)$ in the workspace;
		\item $\rm Act$ is the set of actions containing the union of all feedback controllers \eqref{eq:control_law_u} which can navigate the robot between  RoI;
		\item $\longrightarrow \subseteq S \times \rm Act \times S$ is the transition relation. We say that $(\mathcal{R}_{\scriptscriptstyle \rm s}, u, \mathcal{R}_{\scriptscriptstyle \rm d}) \in \longrightarrow$, with $\mathcal{R}_{\scriptscriptstyle s}$, $\mathcal{R}_{\scriptscriptstyle s} \in \mathcal{R}$ with $\mathcal{R}_{\scriptscriptstyle s} \neq \mathcal{R}_{\scriptscriptstyle d}$ if there exist feedback control law $u \in Act$ as in \eqref{eq:control_law_u} which can drive the robot from the region $\mathcal{R}_{\scriptscriptstyle \rm s}$ to the region $\mathcal{R}_{\scriptscriptstyle \rm d}$ without intersecting with any other RoI of the workspace; 
		\item $\mathfrak{t}$ and $L$ is the time weight and the labeling function as given in \eqref{eq:label_function_L} and \eqref{eq:desired_times_T}, respectively; $\Sigma$ is the set of atomic propositions imposed by Problem \ref{problem}.
	\end{itemize}
\end{definition}
The aforementioned WTS will allow us to work directly at the discrete level and design a sequence of feedback controllers as in \eqref{eq:control_law_u} that solve Problem $1$. By construction, each timed run produced by the WTS $\mathcal{T}$, as timed run given in Definition \ref{run_of_WTS}, is uniquely associated with the trajectory $\chi(t)$ of the system \eqref{eq:kinematics}-\eqref{eq:unc_dynamics}, as given in Definition \ref{def:unique_timed_word}. Hence, if we find a timed run of $\mathcal{T}$ satisfying the given MITL formula $\varphi$, we also find a desired timed word of the original system, and hence a trajectory $\chi(t)$ that is a solution to Problem \ref{problem}.

\subsection{Controller Synthesis} \label{sec:control_synthesis}

Fig. \ref{fig:solution_scheme} depicts a framework under which a sequence of feedback control laws $u(\chi, v)$ that guarantee the satisfaction of the MITL formula $\varphi$ can be computed. First, a TBA $\mathcal{A}$ that accepts all the timed runs satisfying the specification formula $\varphi$ is constructed. Second, a product between the WTS $\mathcal{T}$ given in Definition \ref{def:WTS_abstraction} and the TBA $\mathcal{A}$ is computed which gives the product WTS $\widetilde{\mathcal{T}}$. By performing graph search to the product WTS $\widetilde{\mathcal{T}}$, a timed run that satisfies the MITL formula $\varphi$ can be found. For more details regarding the control synthesis procedure we refer to our previous work \cite{alex_2016_acc, alex_automatica_2018}

\begin{proposition}
	The solution that it is obtained from the aforementioned controller synthesis procedure provides a sequence of
	feedback control laws $u(\chi, v)$ as in \eqref{eq:control_law_u} that guarantees the satisfaction of the formula $\varphi$ of the robot governed by dynamics as in \eqref{eq:kinematics}-\eqref{eq:unc_dynamics}, thus,  providing a solution to Problem \ref{problem}.
\end{proposition}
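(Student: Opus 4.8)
The plan is to establish the soundness of the three-stage pipeline of Fig.~\ref{fig:solution_scheme}, namely that a timed run synthesized on the product automaton is realizable by a concrete closed-loop trajectory, and that this trajectory satisfies $\varphi$ in the sense of Definition~\ref{def:x_satisfaction}. The argument is essentially a chaining of Theorem~\ref{theorem_main}, the abstraction of Definition~\ref{def:WTS_abstraction}, and the MITL controller synthesis of \cite{alex_2016_acc,alex_automatica_2018,maler_MITL_TA}.

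First I would pin down that the WTS $\mathcal{T}$ of Definition~\ref{def:WTS_abstraction} is well posed: for every ordered pair of distinct RoI $(\mathcal{R}_{\scriptscriptstyle \rm s},\mathcal{R}_{\scriptscriptstyle \rm d})$ at which the FHOCP \eqref{eq:mpc_cost_function}--\eqref{eq:mpc_terminal_set} is feasible, Theorem~\ref{theorem_main} furnishes a controller \eqref{eq:control_law_u} which, initialized at any configuration with $\mathcal{B}(\chi(\mathfrak{t}_{\scriptscriptstyle \rm s}),\mathfrak{r})\subsetneq\mathcal{R}_{\scriptscriptstyle \rm s}$, keeps $\mathcal{B}(\chi(t),\mathfrak{r})\subsetneq\mathcal{W}$ and $\mathcal{B}(\chi(t),\mathfrak{r})\cap(\mathcal{R}\setminus\{\mathcal{R}_{\scriptscriptstyle \rm s},\mathcal{R}_{\scriptscriptstyle \rm d}\})=\emptyset$ for all $t$ (this is the state-constraint set $\mathcal{X}$, enforced via $\overline{\mathcal{E}}=\mathcal{E}\ominus\Omega_1$), and drives $\chi$ into the steady-state ball of \eqref{eq:theom_ineq_1} within the finite time $\mathfrak{t}_{\scriptscriptstyle \rm d}$ computed by Algorithm~1. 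Using $\min_{m}p_m>\mathfrak{r}$ together with Assumption~\ref{ass:feasible_sol}, I would verify that this ball satisfies $\mathcal{B}(\chi(\mathfrak{t}_{\scriptscriptstyle \rm d}),\mathfrak{r})\subsetneq\mathcal{R}_{\scriptscriptstyle \rm d}$, so $(\mathcal{R}_{\scriptscriptstyle \rm s},u,\mathcal{R}_{\scriptscriptstyle \rm d})\in\longrightarrow$ is genuinely realizable with weight $\mathfrak{t}(\mathcal{R}_{\scriptscriptstyle \rm s},\mathcal{R}_{\scriptscriptstyle \rm d})=\mathfrak{t}_{\scriptscriptstyle \rm d}$. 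This is exactly what makes every timed run of $\mathcal{T}$ uniquely associated, in the sense of Definition~\ref{def:unique_timed_word}, with the concrete trajectory obtained by applying the corresponding sequence of controllers: the discrete state changes precisely when the robot volume is contained in the next RoI, and the time stamps add up by construction of $\mathfrak{t}$.

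Second, I would invoke the standard MITL machinery: MITL2TBA \cite{maler_MITL_TA} produces a TBA $\mathcal{A}$ whose accepted timed language equals $\{$timed words $\models\varphi\}$, and the product $\widetilde{\mathcal{T}}=\mathcal{T}\otimes\mathcal{A}$ has the property (established in \cite{alex_2016_acc,alex_automatica_2018}) that every accepting run $\widetilde r$ of $\widetilde{\mathcal{T}}$ projects onto a timed run $r^t$ of $\mathcal{T}$ with $r^t\models\varphi$. A graph search on $\widetilde{\mathcal{T}}$ returns such a $\widetilde r$ as a lasso, so the projected run $r^t$ is realized by a finite family of controllers \eqref{eq:control_law_u} applied in sequence (a prefix followed by a cyclically repeated block), and hence the control law $u(\chi,v)$ is defined for all $t\ge0$. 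By the first step the resulting closed-loop trajectory $\chi(t)$ is uniquely associated with $r^t$ and stays in $\mathcal{W}$ for all $t$; therefore, by Definition~\ref{def:x_satisfaction}, $\chi(t)\models\varphi$ for all $t\ge0$, i.e.\ Problem~\ref{problem} is solved.

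The main obstacle is the bookkeeping at the switching instants. At each transition of $r^t$ the newly activated controller must have a feasible initial-time FHOCP so that Theorem~\ref{theorem_main} applies afresh, and the terminal-set accuracy $\epsilon$ together with the tube radius $\widetilde d/\sqrt{\min\{\alpha_1,\alpha_2\}}$ must be small enough --- via $\min_m p_m>\mathfrak{r}$ and Assumption~\ref{ass:feasible_sol} --- that the configuration reached at the end of the previous stage lies strictly inside $r(l)$ and, a fortiori, is admissible for the next stage. Granting recursive feasibility of the FHOCP along the run (the feasibility part of Theorem~\ref{theorem_main}, cf.\ \cite[Thm.~2]{alex_IJRNC_2018}) and the geometric RoI separation, what remains is the routine concatenation described above.
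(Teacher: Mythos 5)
Your proposal is correct and follows essentially the same route as the paper, which leaves this proposition unproved and relies on the chain you spell out: Theorem~\ref{theorem_main} realizes each WTS transition, Definition~\ref{def:WTS_abstraction} ties timed runs to concrete trajectories via Definition~\ref{def:unique_timed_word}, and the TBA product plus graph search from \cite{alex_2016_acc, alex_automatica_2018} yields an accepting run that projects to a satisfying timed word. Your explicit flagging of the switching-instant bookkeeping --- that the FHOCP must be re-feasible at each new stage and that $\epsilon/\sqrt{\lambda_{\scriptscriptstyle \min}(P)}+\widetilde{d}/\sqrt{\min\{\alpha_1,\alpha_2\}}+\mathfrak{r}$ must fit inside $p_{\rm d}$ --- is a point the paper glosses over and is worth retaining.
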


\begin{figure}[t!]
	\centering
	\includegraphics[scale = 0.7]{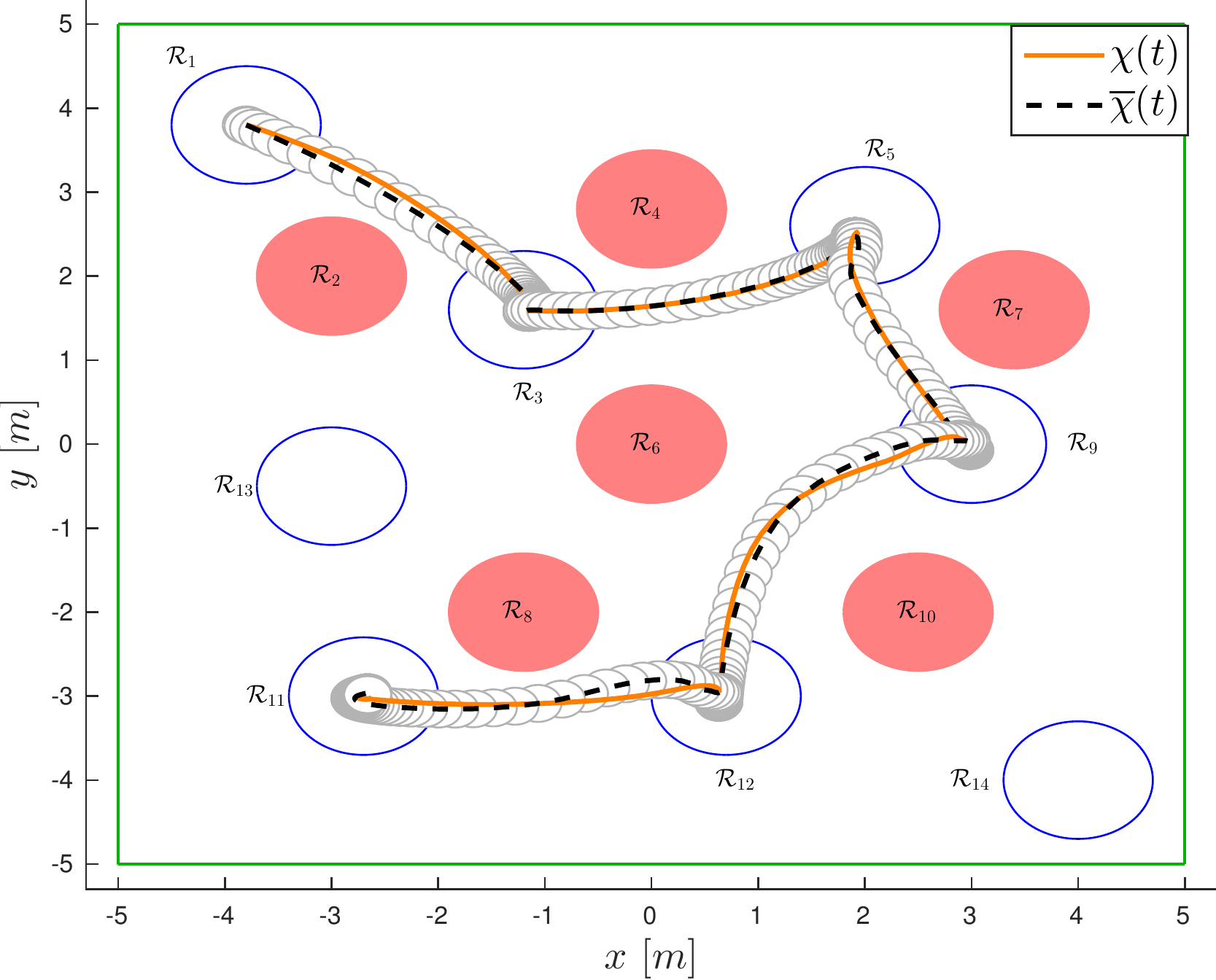}
	\caption{The evolution of the trajectory robot in the workspace $\mathcal{W}$. RoI and unsafe regions are depicted with blue and red color, respectively. The tube of the robot is depicted with light gray color. The real and the nominal trajectories $\chi(t)$ and $\overline{\chi}(t)$, respectively, are depicted with orange and dashed black color. The robot successfully satisfies the task $\varphi$ given in \eqref{eq:phi_simulation}.}\label{fig:workspace}
\end{figure}
\vspace{-5mm}
\section{Simulation Results} \label{sec:simulation_results}
For a simulation example, consider a robot operating in a workspace $\mathcal{W} = \{x,y \in \mathbb{R} : -5 \le x,y \le 5\} \subseteq \mathbb{R}^{2}$ with dynamics:
\begin{align*}
\dot{x} & = v_1, \\
\dot{y} & = v_2, \\
\dot{v}_1 & =  0.25 x^2+u_1+0.25 \cos(t), \\
\dot{v}_2 & =  \dfrac{0.1-0.1 e^{-x}}{1+ e^{-x}}+0.25 y^2+u_2+0.1 u_2^{3}+0.25 \sin(t),
\end{align*}
where $\chi = [x,y]^\top \in \mathbb{R}^{2}$, $v = [v_1, v_2]^\top \in \mathbb{R}^{2}$, $u = [u_1, u_2]^\top \in \mathbb{R}^2$, $d = [0.25 \cos(t), 0.25 \sin(t)]^\top$ and $\widetilde{d} = 0.25$. From \eqref{eq:J_lower_bound}, we get $J(\chi, v, u) = \begin{bmatrix} 1 & 0 \\ 0 & 1+0.3 u_2^2\end{bmatrix}$, which results in $\lambda_{\scriptscriptstyle \min}\left[\tfrac{J(\cdot)+J(\cdot)^\top}{2}\right] \ge \underline{J} = 1.$ The velocity and input constraints are:
\begin{align*}
\mathcal{V} & = \{v \in\ \mathbb{R}^2: -5 \le v_1, v_2 \le 5 \}, \\
\mathcal{U} & = \{u \in \mathbb{R}^2: -2.125 \le u_1, u_2 \le 2.125\},
\end{align*}
respectively. The Lipschitz constant is $L = 2.5$. The initial states of the robot are $\chi(0) = [-3.8, 3.8]^\top$ and $v(0) = [0,0]^\top$. The sampling time and the prediction horizon are set to $h = 0.1 \sec$, $T = 1.2 \sec$, respectively. %The control gains are chosen as $\rho = 1$, $k = 3.5$, which result to a tube with diameter $0.2$ and $0.4$ for states $e$ and $v$, respectively. 
The NMPC gains are set to $Q = P = I_{4}$, $R = 0.5 I_2$.

\begin{figure}[t!]
	\centering
	\includegraphics[scale = 0.70]{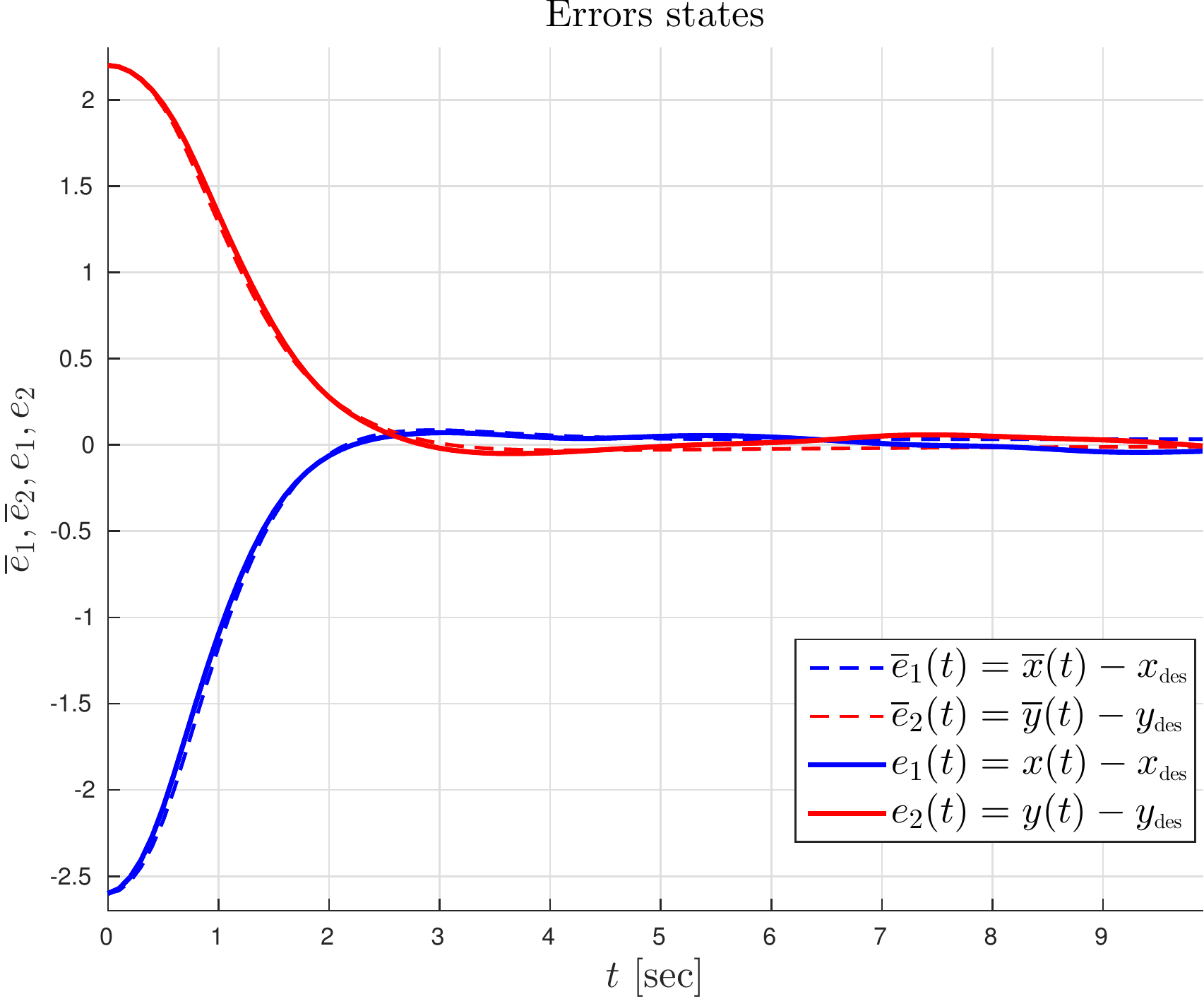}
	\caption{The evolution of the real errors $e_1(t) = x(t) - x_{\scriptscriptstyle \rm des}$, $e_2(t) = y(t) - y_{\scriptscriptstyle \rm des}$ and the corresponding nominal errors $\overline{e}_1(t) = \overline{x}(t) - x_{\scriptscriptstyle \rm des}$, $\overline{e}_2(t) = y(t) - x_{\scriptscriptstyle \rm des}$ for the transition between $\mathcal{R}_{1}$-$\mathcal{R}_{3}$.}\label{fig:errors}
\end{figure}

In the workspace we have $\mathfrak{m} = 14$ RoI with radius $p_m = 0.7$, $\forall m \in \mathcal{M}$ from which $6$ of them stand for unsafe regions that the robot is required not to visit (depicted with red color in Fig. \ref{fig:workspace}). The desired MITL formula is set as:
\begin{align} \label{eq:phi_simulation}
\hspace{-2mm} \varphi = \square_{[0,\infty)} \{\neg {\rm obs}\} \wedge \Diamond_{[6,12]} \{\rm goal_1\} \wedge \Diamond_{[20,30]} \{\rm goal_2\}, \hspace{-2mm}
\end{align}
over the set of atomic propositions $\Sigma = \{\rm obs, \rm goal_1, \rm goal_2\}$ and labeling function $L(\mathcal{R}_{5}) = \{\rm goal_1\}$, $L(\mathcal{R}_{11}) = \{\rm goal_2\}$, $L(\mathcal{R}_i) = \{\rm obs\}$, $i \in \{2,4,6,7,8,10\}$ and $L(\mathcal{R}_i) = \emptyset$, $i \in \{1,$ $3$, $9$, $12$, $13$, $14\}$. Fig. \ref{fig:workspace} depicts the workspace with RoI, unsafe regions, the nominal trajectory of the robot (orange color), the real trajectory of the robot (black color) and the tube centered along the nominal trajectory. By using Algorithm $1$ the time duration of the transitions between RoI $\mathcal{R}_1$-$\mathcal{R}_3$, $\mathcal{R}_3$-$\mathcal{R}_5$, $\mathcal{R}_5$-$\mathcal{R}_9$, $\mathcal{R}_9$-$\mathcal{R}_{12}$ and $\mathcal{R}_{12}$-$\mathcal{R}_{11}$ are $\mathfrak{t}(\mathcal{R}_1, \mathcal{R}_3) = 5.2 \sec$, $\mathfrak{t}(\mathcal{R}_3, \mathcal{R}_5) = 6.1 \sec$, $\mathfrak{t}(\mathcal{R}_5, \mathcal{R}_9) = 4.5 \sec$, and $\mathfrak{t}(\mathcal{R}_9, \mathcal{R}_{12}) = 7.1 \sec$ and $\mathfrak{t}(\mathcal{R}_{12}, \mathcal{R}_{11}) = 4.8 \sec$, respectively. According to Fig. \ref{fig:workspace}, the robot never visits the unsafe RoI. Furthermore, it navigates to goal RoI $\mathcal{R}_{5}$ and $\mathcal{R}_{11}$ at time $11.3 \sec$, $27.7 \sec$, respectively, which results in a successful satisfaction of $\varphi$. Thus, $\chi(t) \models \varphi$, $\forall t \ge 0$. The error signals for the transition between RoI $\mathcal{R}_1$-$\mathcal{R}_3$ are depicted in Fig. \ref{fig:errors}. The evolution of the velocities $v_1(t)$ and $v_2(t)$ is presented in Fig. \ref{fig:velocities}. The control effort for the transition from $\mathcal{R}_1$ to $\mathcal{R}_3$ is presented in Fig. \ref{fig:inputs}.  Finally, Fig. \ref{fig:tube} shows a more detailed zoom in the tube of the trajectory of the robot.

\begin{figure}[t!]
	\centering
	\includegraphics[scale = 0.70]{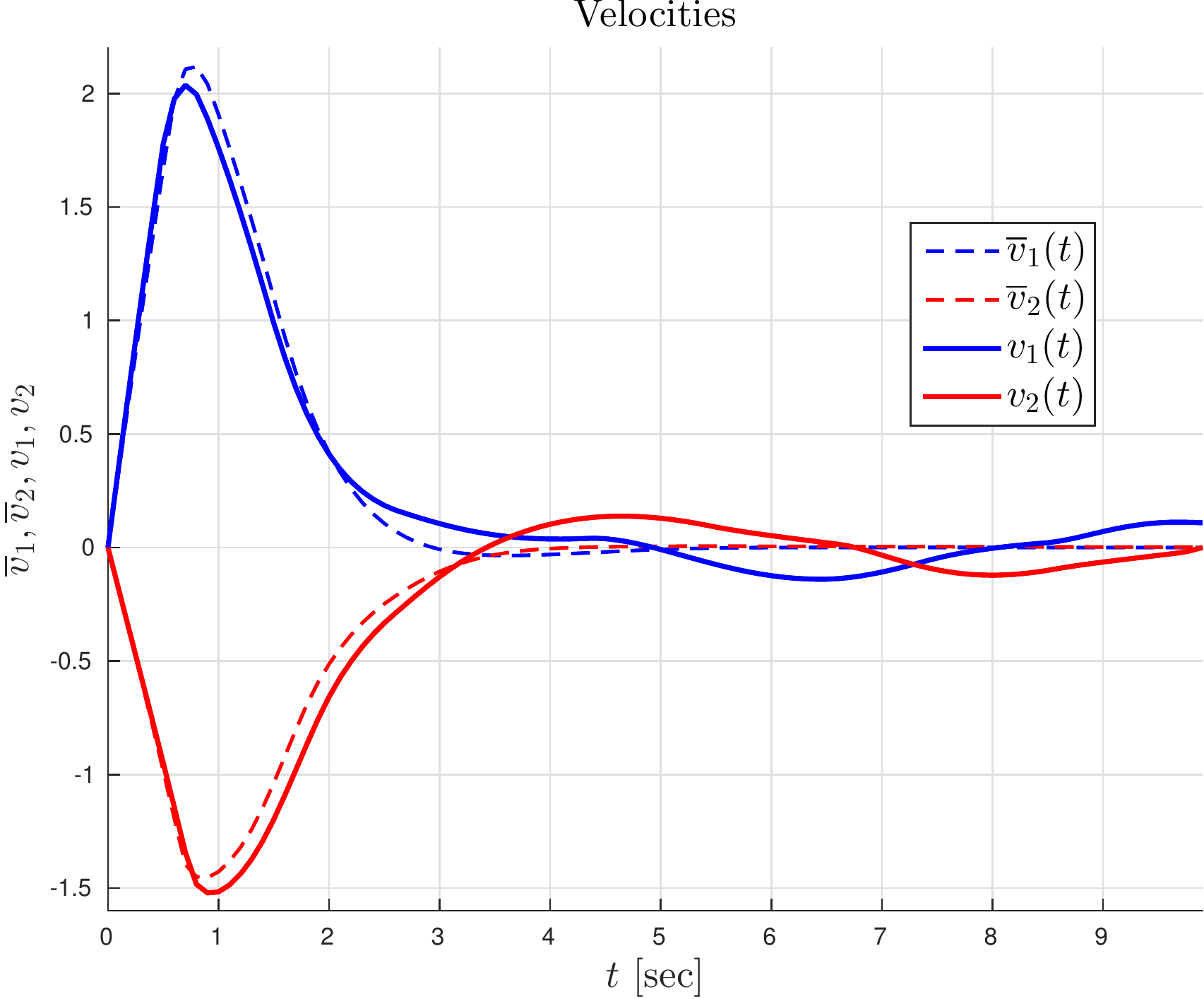}
	\caption{The real and the nominal velocity signals $v_1(t)$, $v_2(t)$ and $\overline{v}_1(t)$, $\overline{v}_2(t)$, respectively for the transition between $\mathcal{R}_{1}$-$\mathcal{R}_{3}$. It holds that $v_i(t) \in \mathcal{V}$, $i \in \{1,2\}$.}\label{fig:velocities}
\end{figure}

The simulation was conduced in MATLAB R2015a by using optimization tools found in \cite{grune2016nonlinear}. It takes $34.2 \sec$ on a laptop with $4$ cores, i7-$2.80$ GHz CPU and~$16$GB~of~RAM.

\begin{figure}[t!]
	\centering
	\includegraphics[scale = 0.70]{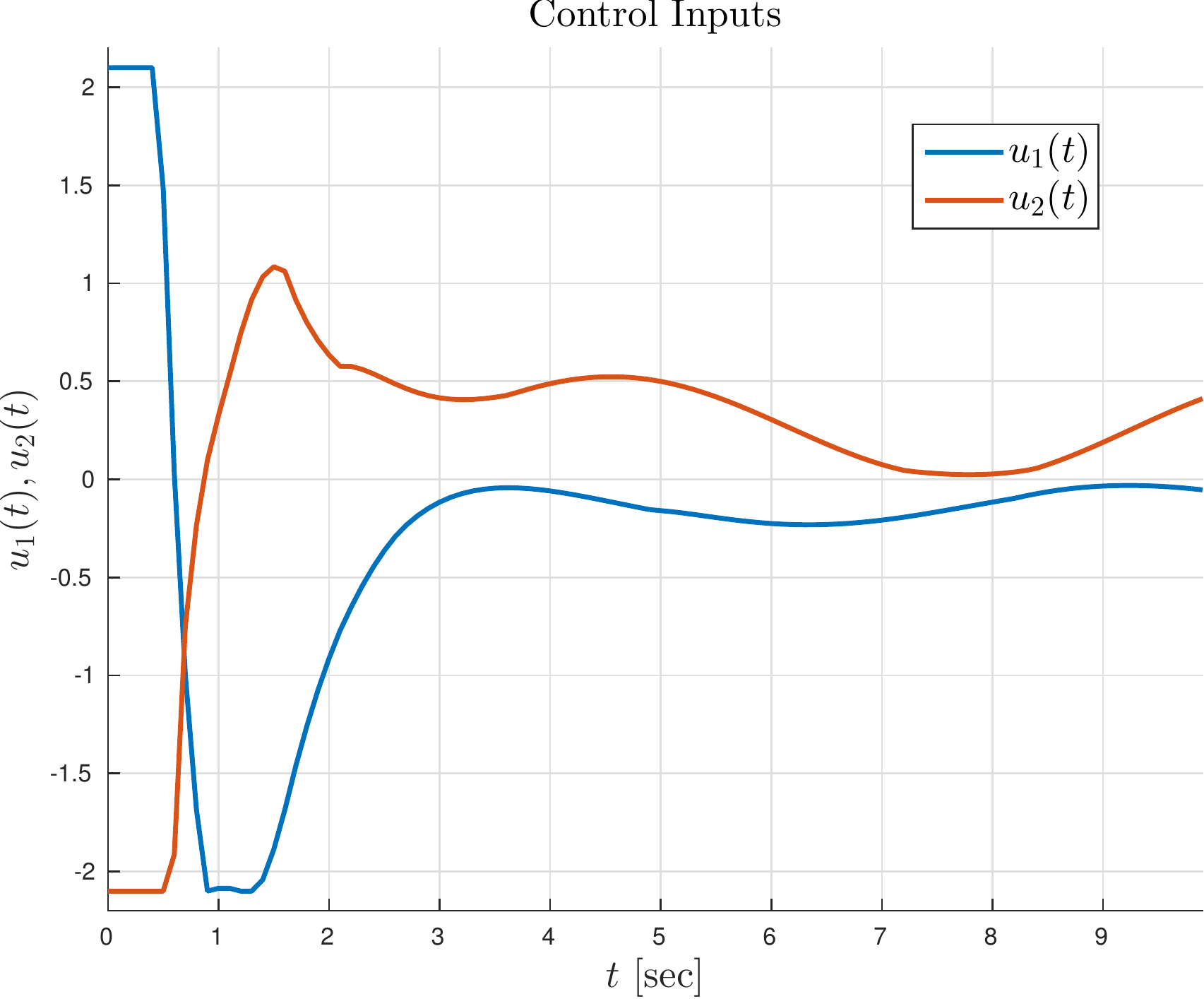}
	\caption{The control input signals $u_1(t)$, $u_2(t)$ for navigating the robot between $\mathcal{R}_1$-$\mathcal{R}_3$. It holds that $u_i(t) \in \mathcal{U}$, $i \in \{1,2\}$.}\label{fig:inputs}
\end{figure}

\section{Conclusions and Future Research} \label{sec:conclusions}

This paper addresses the problem of robot navigation under time constraints given in MITL. The robot is operating in a workspace which is a subset of $\mathbb{R}^{n}$ and it is modeled by nonlinear non-affine kinematics/dynamics. A robust tube-based NMPC scheme which guarantees robust transitions between RoI of the workspace is proposed. By utilizing algorithmic and verification tools from previous work, a framework of controller synthesis which computes the sequence of feedback control laws that provably satisfies the given formula is provided. Future research will be devoted towards extending the current framework to multi-robot systems with event-triggered control communication laws. 

\begin{figure}[t!]
	\centering
	\includegraphics[scale = 0.70]{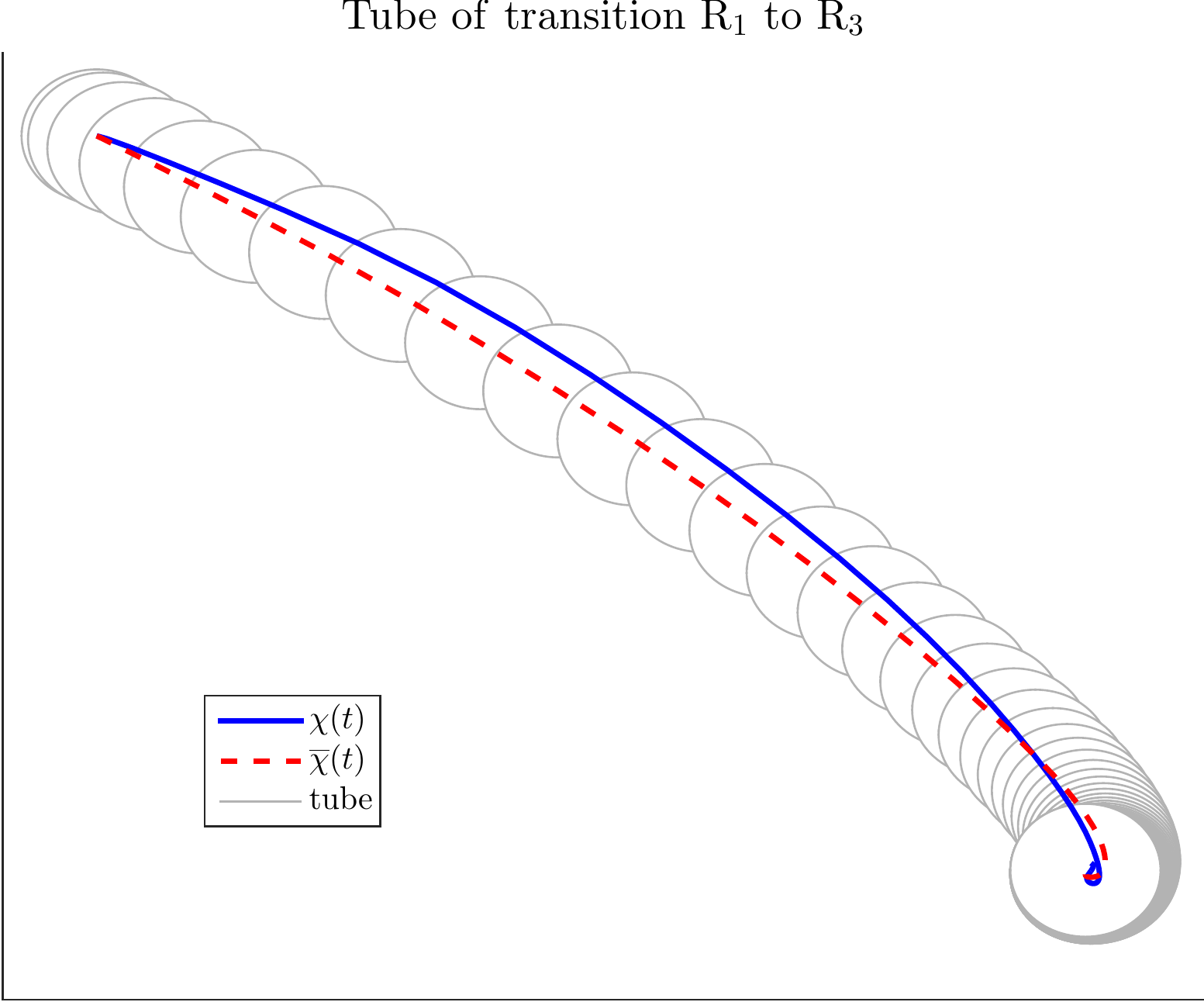}
	\caption{The tube (gray color) from the transition from $\mathcal{R}_1$ to $\mathcal{R}_3$ centered along the nominal trajectory $\overline{\chi}(t)$ (red dashed color) and the real trajectory $\chi(t)$ (blue color). The real trajectory $\chi(t)$ remains always within the tube.}\label{fig:tube}
\end{figure}

\bibliographystyle{ieeetr}   % Include this if you use bibtex 
\bibliography{references}

\clearpage 

\begin{wrapfigure}{l}{25mm} 
	\includegraphics[width=1in,height=1.25in,clip,keepaspectratio]{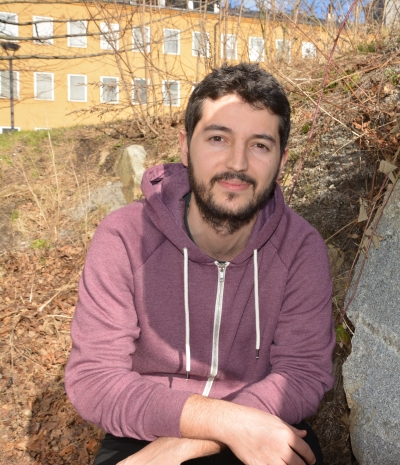}
\end{wrapfigure}\par
\textbf{Alexandros Nikou} was born in 1988. He received the Diploma in Electrical and Computer Engineering in 2012 and the M.Sc. in Automatic Control in 2014, both from National Technical University of Athens (NTUA), Greece. He is currently a PhD student at the Department of Automatic Control, School of Electrical Engineering and Computer Science, KTH Royal Institute of Technology, Stockholm, Sweden. His current research interests include Multi-Agent Systems Control, Distributed Nonlinear Model Predictive Control and Formal Methods in Control.\par

\vspace{5mm}

\begin{wrapfigure}{l}{25mm} 
	\includegraphics[width=1.6in,height=1.25in,clip,keepaspectratio]{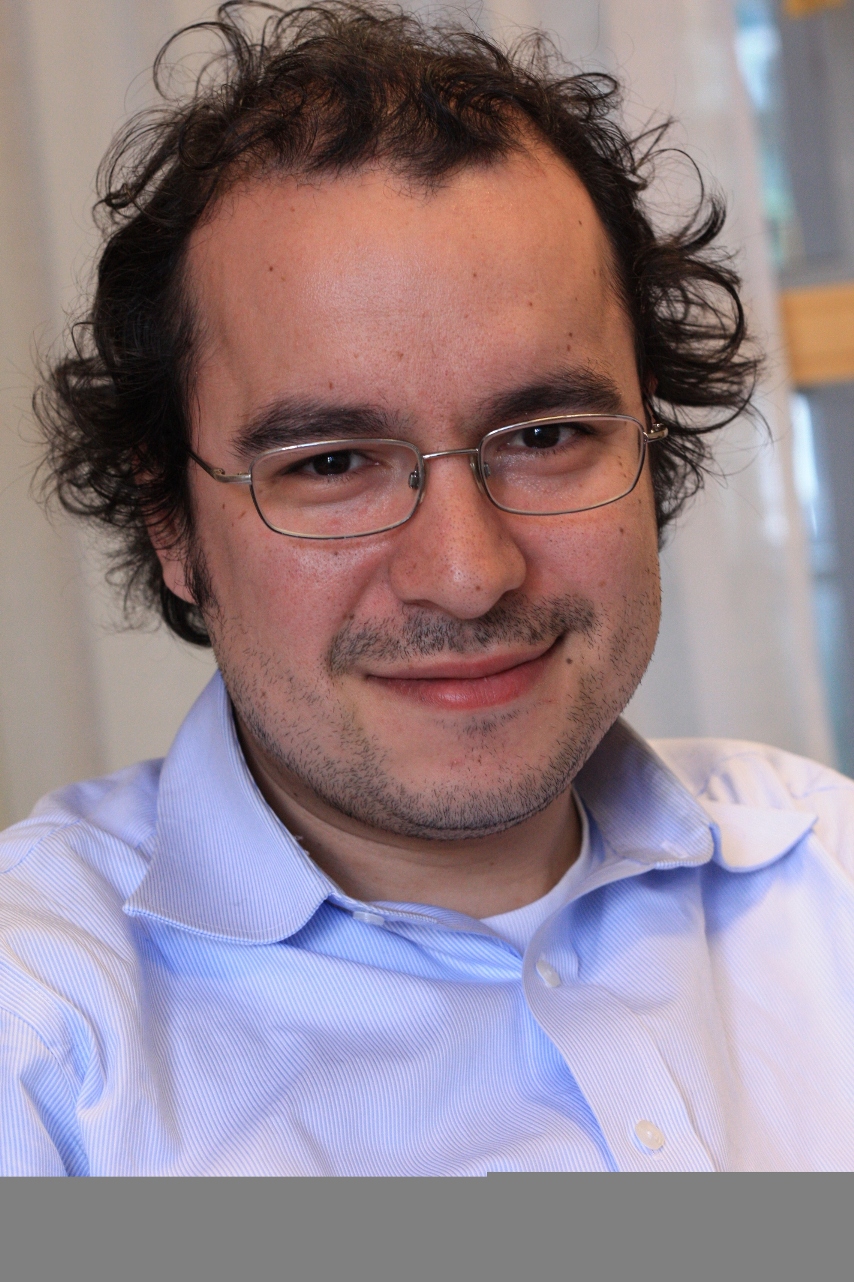}
\end{wrapfigure}\par
\textbf{Dimos Dimarogonas} was born in Athens, Greece, in 1978. He received the Diploma in Electrical and Computer Engineering in 2001 and the PhD in Mechanical Engineering in 2007, both from National Technical University of Athens (NTUA), Greece. Between May 2007 and February 2009, he was a Postdoctoral Researcher at the Department of Automatic Control, School of Electrical Engineering and Computer Science, Royal Institute of Technology (KTH), Stockholm, Sweden. Between February 2009 and March 2010, he was a Postdoctoral Associate at the Laboratory for Information and Decision Systems (LIDS) at the Massachusetts Institute of Technology (MIT), Boston, MA, USA. He is currently Professor at the Department of Automatic Control, KTH Royal Institute of Technology, Stockholm, Sweden. His current research interests include Multi-Agent Systems, Hybrid Systems and Control, Robot Navigation and Networked Control. He serves in the Editorial Board of Automatica, the IEEE Transactions on Automation Science and Engineering and the IET Control Theory and Applications and is a Senior member of IEEE and the Technical Chamber of Greece.\par
\end{document}